\documentclass[nonacm,sigconf]{acmart}

\usepackage{algorithm}
\usepackage[noend]{algpseudocode}
\usepackage{amsmath}
\usepackage{amssymb}
\usepackage{amsthm}
\usepackage{enumitem}
\usepackage{tabularx}
\usepackage{xspace}
\usepackage{tikz}
\usepackage{tikz-qtree}
\usetikzlibrary{arrows,arrows.meta,calc,ipe,patterns,positioning,shapes.geometric,trees}

\definecolor{red}{rgb}{1,0,0}
\definecolor{green}{rgb}{0,1,0}
\definecolor{blue}{rgb}{0,0,1}
\definecolor{yellow}{rgb}{1,1,0}
\definecolor{orange}{rgb}{1,0.647,0}
\definecolor{gold}{rgb}{1,0.843,0}
\definecolor{purple}{rgb}{0.627,0.125,0.941}
\definecolor{gray}{rgb}{0.745,0.745,0.745}
\definecolor{brown}{rgb}{0.647,0.165,0.165}
\definecolor{navy}{rgb}{0,0,0.502}
\definecolor{pink}{rgb}{1,0.753,0.796}
\definecolor{seagreen}{rgb}{0.18,0.545,0.341}
\definecolor{turquoise}{rgb}{0.251,0.878,0.816}
\definecolor{violet}{rgb}{0.933,0.51,0.933}
\definecolor{darkblue}{rgb}{0,0,0.545}
\definecolor{darkcyan}{rgb}{0,0.545,0.545}
\definecolor{darkgray}{rgb}{0.663,0.663,0.663}
\definecolor{darkgreen}{rgb}{0,0.392,0}
\definecolor{darkmagenta}{rgb}{0.545,0,0.545}
\definecolor{darkorange}{rgb}{1,0.549,0}
\definecolor{darkred}{rgb}{0.545,0,0}
\definecolor{lightblue}{rgb}{0.678,0.847,0.902}
\definecolor{lightcyan}{rgb}{0.878,1,1}
\definecolor{lightgray}{rgb}{0.827,0.827,0.827}
\definecolor{lightgreen}{rgb}{0.565,0.933,0.565}
\definecolor{lightyellow}{rgb}{1,1,0.878}
\definecolor{black}{rgb}{0,0,0}
\definecolor{white}{rgb}{1,1,1}

\newlength\figureheight
\newlength\figurewidth

\tikzstyle{ipe stylesheet} = [
  ipe import,
  even odd rule,
  line join=round,
  line cap=butt,
  ipe pen normal/.style={line width=0.4},
  ipe pen heavier/.style={line width=0.8},
  ipe pen fat/.style={line width=1.2},
  ipe pen ultrafat/.style={line width=2},
  ipe pen normal,
  ipe mark normal/.style={ipe mark scale=3},
  ipe mark large/.style={ipe mark scale=5},
  ipe mark small/.style={ipe mark scale=2},
  ipe mark tiny/.style={ipe mark scale=1.1},
  ipe mark normal,
  /pgf/arrow keys/.cd,
  ipe arrow normal/.style={scale=7},
  ipe arrow large/.style={scale=10},
  ipe arrow small/.style={scale=5},
  ipe arrow tiny/.style={scale=3},
  ipe arrow normal,
  /tikz/.cd,
  ipe arrows,
  <->/.tip = ipe normal,
  ipe dash normal/.style={dash pattern=},
  ipe dash dashed/.style={dash pattern=on 4bp off 4bp},
  ipe dash dotted/.style={dash pattern=on 1bp off 3bp},
  ipe dash dash dotted/.style={dash pattern=on 4bp off 2bp on 1bp off 2bp},
  ipe dash dash dot dotted/.style={dash pattern=on 4bp off 2bp on 1bp off 2bp on 1bp off 2bp},
  ipe dash normal,
  ipe node/.append style={font=\normalsize},
  ipe stretch normal/.style={ipe node stretch=1},
  ipe stretch TikZ-normal/.style={ipe node stretch=1},
  ipe stretch normal,
  ipe opacity 10/.style={fill opacity=0.1},
  ipe opacity 30/.style={fill opacity=0.3},
  ipe opacity 50/.style={fill opacity=0.5},
  ipe opacity 75/.style={fill opacity=0.75},
  ipe opacity opaque/.style={opacity=1},
  ipe opacity opaque,
]

\newcommand{\lftj}{Leapfrog-Triejoin\xspace}

\newcommand{\rtj}{RTJ\xspace}
\newcommand{\radix}{Radix-Triejoin\xspace}

\DeclareMathOperator*{\bigJoin}{\Join}

\newcommand{\cR}{\mathcal{R}}
\newcommand{\cA}{\mathcal{A}}
\newcommand{\cB}{\mathcal{B}}
\newcommand{\cH}{\mathcal{H}}
\newcommand{\cE}{\mathcal{E}}
\newcommand{\cS}{\mathcal{S}}
\newcommand{\cV}{\mathcal{V}}

\newcommand{\bD}{\mathbf{D}}
\newcommand{\bU}{\mathbf{U}}

\allowdisplaybreaks

\AtBeginDocument{%
  \providecommand\BibTeX{{%
    \normalfont B\kern-0.5em{\scshape i\kern-0.25em b}\kern-0.8em\TeX}}}

\setcopyright{none}

\settopmatter{printfolios=true,printacmref=false}
\renewcommand\footnotetextcopyrightpermission[1]{} 
\pagestyle{plain} 

\begin{document}

\title{Worst-Case Optimal Radix Triejoin}
\author{Alan Fekete}
\affiliation{%
  \institution{The University of Sydney}
  \city{Sydney}
  \country{Australia}
}

\author{Brody Franks}
\affiliation{%
  \institution{The University of Sydney}
  \city{Sydney}
  \country{Australia}
}

\author{Herbert Jordan}
\affiliation{%
  \institution{Universit\"at Innsbruck}
  \city{Innsbruck}
  \country{Austria}
}

\author{Bernhard Scholz}
\affiliation{%
  \institution{The University of Sydney}
  \city{Sydney}
  \country{Australia}
}

\renewcommand{\shortauthors}{A.~Fekete, B.~Franks, H.~Jordan, and B.~Scholz}

\begin{abstract}
Relatively recently, the field of join processing has been swayed by the discovery of a new class of multi-way join algorithms. The new algorithms join multiple relations simultaneously rather than perform a series of pairwise joins. The new join algorithms satisfy stronger worst-case runtime complexity guarantees than any of the existing approaches based on pairwise joins -- they are worst-case optimal in data complexity. These research efforts have resulted in a flurry of papers documenting theoretical and some practical contributions. However, there is still the quest of making the new worst-case optimal join algorithms truly practical in terms of (1) ease of implementation and (2) secondary index efficiency in terms of number of indexes created to answer a query.

In this paper, we present a simple worst-case optimal multi-way join algorithm called the radix triejoin. 
Radix triejoin uses a binary encoding for reducing the domain of a database. 
Our main technical contribution is that domain reduction allows a bit-interleaving of attribute values that gives 
rise to a query-independent relation representation, permitting the computation of multiple queries over the same 
relations worst-case optimally without having to construct additional secondary indexes.
We also generalise the core algorithm to conjunctive queries with inequality constraints and provide a new proof technique for the worst-case optimal join result.
\end{abstract}

\maketitle

\section{Introduction} \label{chap:introduction}

Join processing is one of the most studied problems in computer science.
Joins are at the heart of relational database queries, and are also
known to be applicable in various fields of computer science including
problems in graph theory~\cite{nguyen2015join,aberger2017emptyheaded},  large-scale data analytics \citep{ngo2012worst,ngo2014beyond},
social network analysis \citep{kara2019counting}, inference \citep{abo2016faq}, constraint satisfaction \citep{kolaitis2000conjunctive}, coding theory \citep{gilbert2013ell}, and machine learning \citep{schleich2016learning}.
Traditionally, multi-way joins have been evaluated by a query plan composed of pairwise joins. However, it is known that the pairwise plans are asymptotically suboptimal in worst-case runtime complexity, and
in the last decade a new algorithm class has emerged --
the algorithm class of \emph{worst-case optimal join} algorithms.

The fundamental breakthrough that precipitated worst-case optimal joins was the work of 
\citeauthor*{atserias2008size}  establishing what is now known as the AGM bound: a tight worst-case output size bound
for a given join query in terms of its input relation sizes and the query's structural properties ~\cite{atserias2008size}.
Motivated by the AGM bound, a new class of join algorithms has been devised, which are ``worst-case optimal''; these exhibit, for any given query, a runtime whose worst case over database instances of a given size, coincides with the
AGM bound on the maximum output size of that join query (hiding constants and single query-expression-size factors and a logarithmic factor in the data size).
Examples of these algorithms are NPRR ~\cite{ngo2012worst} and leapfrog triejoin \citep{veldhuizen2014triejoin}.
There have been some practical studies of worst-case optimal join algorithms
\citep{nguyen2015join,aberger2017emptyheaded,chu2015theory}. 

Current worst-case optimal join algorithms suffer a drawback in that they often require the database to have a large number of secondary indexes; this  places demands on computation and memory.
Our work addresses this concern by offering a new approach of performing worst-case optimal joins. Our new
algorithm departs from the comparison-based paradigm of the previous algorithms.
Our algorithm has an analogy to radix sort,
as it uses properties of the key values themselves rather than comparisons.

To illustrate our new \emph{Worst-Case Optimal Radix Triejoin}, consider the triangle query
$Q(A, B, C)$ $:=$ $R(A, B)$  $\Join$ $S(B, C)$ $\Join$ $T(A, C)\text{.}$
For this query, traditional pairwise join evaluation plans are suboptimal with a 
worst-case execution time of $\Omega(N^2)$ 
for relation sizes $|R|, |S|, |T| \leq N$. 
It is known via the AGM bound that for the triangle query a tight bound on the maximum number of triangles
is $O(N^{1.5})$ and hence a worst-case optimal join algorithm exhibits a worst-case execution time
of $O(N^{1.5} \log N)$ for the triangle query (cf.~\citep{ngo2013skew}). In general, the AGM bound can be written as $O(N^{\rho^*(Q)})$ where $N$ is the size of the relations and $\rho^*(Q)$ is the 
query complexity obtained by a linear program.

At a high-level view, radix triejoin is based on two stages:
The first stage performs \emph{Booleanisation}~(cf.~\cite{atserias2007conjunctive}) of relations where for a relation $R$ we define an equivalent
relation $R^{(w)}$ with $w$ times more attributes, each of which is taken from the Boolean domain ${\mathbb B}=\{0, 1\}$.
A new query is defined on the Booleanised relations.
The second stage solves the Booleanised query by using the generic backtracking, divide-and-conquer approach from prior work
\cite{ngo2013skew}.

The Booleanisation of a query conceptually underpins the representation of each value in the database as a bitstring, by a suitable 
encoding of $w$ bits. For each attribute in a database schema,
we create $w$ new attributes that are associated with positions in the bitstring.
The values of the new attributes are reduced to simply $0-1$ values.
For example, with encoding length $w = 2$, the Booleanisation of the triangular query  is
$Q^{(2)}(A_0, A_1, B_0, B_1, C_0, C_1)$ 
$:=$  
$R^{(2)}(A_0, A_1, B_0, B_1)$
$\Join$ $S^{(2)}(B_0, B_1, C_0, C_1)$
$\Join$ $T^{(2)}(A_0, A_1, C_0, C_1)\text{.}$
An example database for this query on domain $\{0, 1, 2\}$ is
$R(A, B)$ $=$ $\{(0, 1)\}$, $S(B, C)$ $=$ $\{(1, 2)\}$, $T(A, C)$ $=$ $\{(0, 2)\}$
and a possible Booleanisation of the relations could be
$R^{(2)}(A_0, A_1, B_0, B_1)$ $=$ $\{(0, 0, 0, 1)\}$, 
$S^{(2)}(B_0, B_1, C_0, C_1)$ $=$ $\{(0, 1, 1, 0)\}$,
$T^{(2)}(A_0, A_1, C_0, C_1)$ $=$ $\{(0, 0, 1, 0)\}$,
where we have encoded the original domain $\{0, 1, 2\}$ of the database in the binary numeral system.

The second stage of radix triejoin is to solve the Booleanisation of the query
using a divide-and-conquer, backtracking approach. At a high-level, we
follow the recursive query decomposition approach that is known as the
\emph{generic framework} \citep{ngo2013skew}. 
Compared to traditional pairwise join algorithms, which employ a \emph{relation-based} approach,
the generic framework utilises an \emph{attribute-based} search:
satisfying assignments for the query are found by initially starting with no information
about the answer, and at a given step a candidate solution is extended by a possible binding for one of the remaining attributes.
Continuing the triangle query example, we first select an attribute arbitrarily, say $A_0$, then compute the solutions
$a_0$ of the subquery $L_1 := \pi_{A_0}(R^{(2)}) \Join \pi_{A_0}(T^{(2)})$.
The fundamental property of the subquery is that if $a_0$ is part of an answer of the full query, then $a_0$ is in the
answer of the subquery.
Since the subquery $L_1$ is over only one attribute,
we just check both possible output values ($0$ or $1$) to compute the subquery answer.
In the second step, we select another attribute, e.g., $B_0$, and for each $a_0$ we have computed,
compute the set of $b_0 \in L_2 := \pi_{B_0} (\sigma_{A_0 = a_0} (R^{(2)})) \Join \pi_{B_0} (S^{(2)})$.
Again, the subquery $L_2$ involves only one non-bound attribute making it simple to compute. The process
continues for the total of six Boolean attributes in this example. Algorithm~\ref{alg:triangle-join}
illustrates the full process as nested for-loops.
\algrenewcommand\algorithmicindent{1.1mm}%
\begin{algorithm}[t]
\caption{Booleanized Triangular Join Query}
\begin{algorithmic}[1]
\footnotesize
\Require $R^{(2)}(A_0, A_1, B_0, B_1), S^{(2)}(B_0, B_1, C_0, C_1), T^{(2)}(A_0, A_1, C_0, C_1)$
\State $Q^{(2)} \gets  \emptyset$
\State $L_1 \gets \pi_{A_0}(R^{(2)}) \Join \pi_{A_0}(T^{(2)})$
\For{every $a_0 \in L_1$}
\State $L_2 \gets \pi_{B_0} (\sigma_{A_0 = a_0} (R^{(2)})) \Join \pi_{B_0} (S^{(2)})$
	\For{every $b_0 \in L_2$}
\State $L_3 \gets \pi_{C_0} (\sigma_{B_0 = b_0} (S^{(2)})) \Join \pi_{C_0} (\sigma_{A_0 = a_0} (T^{(2)}))$
		\For{every $c_0 \in L_3$}
\State $L_4 \gets \pi_{A_1}(\sigma_{(A_0, B_0) = (a_0,  b_0)} (R^{(2)})) \Join \pi_{A_1}(\sigma_{(A_0,C_0) = (a_0, c_0)} (T^{(2)}))$
			\For{every $a_1 \in L_4$}
\State $L_5 \gets \pi_{B_1} (\sigma_{(A_0,A_1,B_0) = (a_0, a_1, b_0)} (R^{(2)}))  \Join \pi_{B_1} (\sigma_{(B_0,C_0) = (b_0,c_0)} (S^{(2)}))$
				\For{every $b_1 \in L_5$}
\State  $L_6 \gets \pi_{C_1} \left ( \begin{array}{l} \sigma_{(B_0,B_1,C_0) = (b_0,b_1,c_0)} (S^{(2)})) \Join \\ \pi_{C_1} (\sigma_{(A_0, A_1, C_0) = (a_0,a_1,c_0)} (T^{(2)}) \end{array} \right)$
					\For{every $c_1 \in L_6$}
			\State  $Q^{(2)} \gets Q^{(2)} \cup \{(a_0, b_0, c_0, a_1, b_1, c_1)\}$.
		\EndFor
		\EndFor
	\EndFor
		\EndFor
	\EndFor
\EndFor
\end{algorithmic}
\label{alg:triangle-join}
\end{algorithm}

In this work, we introduce a generic Radix Triejoin algorithm called gRTJ, which permits arbitrary attribute orders. 
We formally prove the correctness of gRTJ using a subquery recurrence for reducing the search space.
We show that gRTJ on Booleanised queries is ``worst-case optimal''.
It achieves a worst-case execution runtime coinciding with the AGM bound (up
to a factor of the encoding length and query expression size, 
essentially equivalent to prior worst-case optimal algorithms). 
We present a new analysis of runtime for the generic framework, which proves an instance bound 
on the runtime of gRTJ (as well as e.g. leapfrog triejoin, which also instantiates 
the generic framework). 
Our bound is an exact instance bound. As far as we know, this is the first
such result. Our presentation is data-structure independent, but we suggest that
bitwise tries, quadtrees, or ordered binary decision diagrams (OBDDs) are suitable index structures for the relations.

Later in this work, we extend gRTJ to conjunctive queries with inequalities and query-independent representations. We call the extension the RTJ algorithm.  
A technical decision of attribute-based multi-way join algorithms so far is that the attributes of a query are 
processed in a fixed (but arbitrary) total order
(in Algorithm~\ref{alg:triangle-join} the attribute order is $A_0 \prec B_0 \prec C_0 \prec A_1 \prec B_1 \prec C_1$).
The attribute order defines how secondary index structures for each relation should be built for use in the join processing.
A side effect of the decision on attribute order is that excess indexes are sometimes required to be created.
Additionally, while the attribute order is immaterial to the worst-case analysis, it can
have a large effect on per-instance running time of a query. Similar to the use of backtracking in SAT solving
contexts, it turns out that the subqueries of the form above often
over-approximate candidate solutions. For example, in Algorithm~\ref{alg:triangle-join}, it is possible that the
size of a set $L_i$ is larger than the output size of the query on the given instance (but not larger than
the maximum output size of the query since it is worst-case optimal). To date,
the effect of attribute order on secondary index creation and per-instance runtime has received relatively 
little attention in the literature.

We extend gRTJ, by choosing the attribute order appropriately and by dealing with several bits in one step, so that we will be able to use
a single precomputed index per relation, i.e., an \emph{index-organised table}, and with this we can answer any join query on the given database instance.
Indexes can be precomputed without knowing the queries ahead of time. Avoiding the linear
runtime cost of index pre-computation particular to a query is already desirable since
if indexes are already in place some queries can be run in time even \emph{sub-linear} in the size of the input.
The specialised index-minimising algorithm has a runtime
overhead in the number of variables, yet remains worst-case optimal in the data complexity.  
This paper includes material which is in more detail in the second author's student thesis~\cite{brodythesis19}.

The contribution of this work is as follows:
\begin{itemize}
    \item a generic Radix Triejoin (gRTJ) algorithm that is non-com\-par\-i\-son based and data-structure independent  (in~Sec.~\ref{sec:radix-triejoin}),
    \item correctness of a subquery recurrence for reducing the search space (in~Sec.~\ref{chap:algorithm}),
    \item a new runtime analysis for an instance bound (in~Sec.~\ref{sec:analysis}),
    \item extending gRTJ for conjunctive queries with inequality constraints, and query-independent representations using bit-interleaving (in~Sec.~\ref{chap:extensions}).
\end{itemize}

\section{Preliminaries and Notation} \label{chap:preliminaries}
For the exposition of this paper, we follow the notation similar to~\cite{abiteboul1995foundations}. 
We assume the existence of a finite set of constants $\bU$ called the \emph{universe} or \emph{domain of discourse}.
A \emph{relation name} is a symbol $R$ associated with a finite attribute set
$\cA = \{A_1, \dots, A_r\}$, and is denoted by $R(\cA)$ or $R(A_1, \dots, A_r)$.
For a relation name $R(\cA)$, an \emph{$\cA$-tuple} (or simply \emph{tuple}) is a function $t : \cA \rightarrow \bU$.
Let $\bU^{\cA}$ denote the set of all $\cA$-tuples.
An \emph{$\cA$-relation} (or simply \emph{relation}) is a relation name $R(\cA)$ associated
with a subset of $\bU^\cA$. 
For a relation name $R(A_1, \dots, A_r)$ with an ordered attribute list,
it is common to identify a relation with a subset of the $r$-th Cartesian product $\bU^r$.
A \emph{schema} is a finite set of relation names $\cR$. 
A \emph{database} $\bD$ of schema $\cR$, or an \emph{$\cR$-database}, consists of an $\cA$-relation $R^\bD$ 
for every relation name $R(\cA)$ in $\cR$. For an attribute set $\mathcal{S}$, we write $t_\mathcal{S}$ to
denote the restriction of an $\cA$-tuple $t$ to $\mathcal{S}$. The \emph{projection} of a relation $R$
onto $\mathcal{S}$ is defined as
$
\pi_{\mathcal{S}} (R) = \{ t_{\mathcal{S}} \mid t \in R \}.
$
The \emph{semijoin} operator for two relations $R(\cA)$ and $S(\cB)$
is defined as
\[R \lJoin S := \{t \in R \mid \exists s \in S \text{ such that } \pi_{\cB}(t) = \pi_{\cA}(s)\}\]
that is, $R \lJoin S$ ``filters'' $R$ to only those tuples $t$ where there is a $\cB$-tuple $s$ in $S$ that joins
with $t$ on the common attributes.
A \emph{natural join
query} $Q$ (or simply \emph{query})
is specified by a schema $\{R_i(\cA_i)\}_{1 \leq i \leq m}$, and is written in the form
$Q := R_1 \Join \dots \Join R_m$ or $Q := \bigJoin_{1 \leq i \leq m} R_i$.
We write $\cA_Q := \cA_1 \cup \dots \cup \cA_m$ for the set of attributes of $Q$.
For a schema $\cR$ where $\{R_i(\cA_i)\}_{1 \leq i \leq m} \subseteq \cR$,
the \emph{answer} to a join query $Q$ on an $\cR$-database $\bD$
is denoted by $Q^\bD$,
and is defined as the set of exactly those $\cA$-tuples
$t$ whose projection onto the attribute set of each relation is an element of the relation. That is,
\begin{equation*} \label{eq:join-result}
Q^\bD := \left\{t \in \bU^{\cA_Q} \,\middle|\, R_i^\bD \lJoin t \neq \emptyset \ \text{for all}\  1 \leq i \leq m\right\}\text{.}
\end{equation*}
Potentially the result of $Q^\bD$ is as large as $|\bU|^{|\cA_Q|}$.
Often in this work we drop the database instance $\bD$ from $R_i^\bD$ and $Q^\bD$,
writing simply $R_i$ or $Q$ if the database is clear from context or implicit.
Henceforth we use also the convention that
$m$ is the number of relations and $n$
is the number of attributes of a query.

For the constants in the universe $\bU$, we introduce an encoding function 
$E : \bU \rightarrow {\mathbb B}^w$, which maps constants to 
bitstrings of length $w$. For a unique encoding, we 
require that $w \geq \lceil \log_2 |\bU| \rceil$. 
Given a database $\bD$
of domain $\bU$
and a $w$-bit encoding $E$ for $\bU$,
we can create a database $\bD^{(w)}$, which
is semantically equivalent with $\bD$.
We call $\bD^{(w)}$ the \emph{Booleanisation} of $\bD$
as it transforms $\bD$ with universe $\bU$ to a 
database in the Boolean universe $\{0, 1\}$.
Let $\cR = \{R_1(\cA_1), \dots, R_d(\cA_d)\}$
be a schema, and define $\cA_\cR := \cA_1 \cup \dots \cup \cA_d$ the set of attributes of the schema.
Let $w$ be an encoding length.
For each $A \in \cA_\cR$, we assume
the existence of a set $A^{(w)} := \{A_0, \dots, A_{w-1}\}$ of $w$
new attributes indexed from $0$ to $w-1$. Moreover,
for all $A \in \cA_\cR$
and $B \in \cA_\cR \setminus \{A\}$,
the sets $A^{(w)}$ and $B^{(w)}$ are required to be disjoint.
We define $\cA_i^{(w)} := \bigcup_{A \in \cA_i} A^{(w)}$.
For each $R_i$, let $R_i^{(w)}(\cA_i^{(w)})$
be a relation name on attribute set $\cA_i^{(w)}$ of a distinct relation symbol $R_i^{(w)}$.
The schema $\cR^{(w)} := \{R_1^{(w)}(\cA_1^{(w)}), \dots, R_d^{(w)}(\cA_d^{(w)})\}$
is called the \emph{$w$-th Booleanisation} of $\cR$.
For each $R_i$ of arity $r_i$ of the original schema $\cR$,
there
is a relation name $R_i^{(w)}$ of arity $r_iw$ in the Booleanisation $\cR^{(w)}$.
Given a join query $Q := \bigJoin_{1 \leq i \leq m} R_i$,
we also define the Booleanisation of a join query as $Q^{(w)} := \bigJoin_{1 \leq i \leq m} R_i^{(w)}$. 
Note that the Booleanisation of a database, unlike that of a schema, is not uniquely determined in general, as it depends on the encoding funciton $E$. The encoding function $E$ is closely related to the notion of an embedding between two relational structures \citep{gradel2007finite}.

\section{Natural Joins with generic Radix Triejoin} \label{chap:algorithm}

We describe the \emph{generic radix triejoin} (gRTJ) algorithm that uses Booleanisation to solve a natural join query $Q$ over a database $\bD = \{R_1, \dots, R_d\}$ of $d$ relations. The algorithm can be extended for complete conjunctive queries, see Section~\ref{chap:extensions}.
We prove correctness, and worst-case runtime optimality for gRTJ using an alternative proof strategy compared with the state-of-the-art.

We break gRTJ into two steps: (1) Booleanisation of a query, and (2) a backtracking/\emph{attribute-based} algorithm for solving the Booleanised query.
The Booleanisation transforms a query over a universe $\bU$ 
to an equivalent query over the Boolean universe ${\mathbb B}=\{0, 1\}$. The backtracking algorithm (Algorithm~\ref{alg:recursive-join}) solves the Booleanised
query by searching the attribute space. Initially, the algorithm starts with an empty candidate solution for its attributes, and in a given step fixes
one or more of the remaining attributes to concrete (Boolean) values by enumerating all combinations of their truth assignments.
Since the search space of attributes is reduced to single bits,
Booleanisation permits alternative search strategies for gRTJ compared with existing approaches.
We show that gRTJ solves
Booleanised queries worst-case optimally in the data complexity of the original query,
and has an
essentially equivalent dependency on the query expression-size terms to existing
worst-case optimal algorithms. Note that in the runtime expression of the other algorithms,
the encoding length $w$ is replaced by purely a $\log N$ term.

\begin{theorem} \label{thm:rtj-worst-case}
For a query $Q$ of $m$ relations, $n$ attributes
and relations of size $O(N)$ over a universe $\bU$,
gRTJ exhibits a worst-case runtime complexity of $O(mnw \cdot N^{\rho^*(Q)})$
where $w$ is the encoding length with $w = \lceil \log_2 |\bU| \rceil$ as the tightest encoding length.
\end{theorem}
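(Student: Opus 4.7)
The plan is to reduce the statement to the standard analysis of the attribute-based generic framework applied to the Booleanised query, then account for the overhead introduced by Booleanisation. I would structure the proof in three stages.

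First, I would verify that Booleanisation is a structure-preserving reduction. The encoding $E$ induces a bijection between $Q^\bD$ and $(Q^{(w)})^{\bD^{(w)}}$, and the relation sizes are preserved: $|R_i^{(w)}| = |R_i| = O(N)$. The central lemma here is that $\rho^*(Q^{(w)}) = \rho^*(Q)$. I would prove this by lifting: given any fractional edge cover $x$ of $Q$, assign the same weight $x_i$ to each $R_i^{(w)}$. Since every Boolean attribute $A_j \in A^{(w)}$ is contained in exactly the relations $R_i^{(w)}$ for which $A \in \cA_i$, feasibility is preserved, and the objective is unchanged. The reverse direction follows because we can average the weights across the $w$ copies of each attribute. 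Consequently the AGM bound yields $|(Q^{(w)})^{\bD^{(w)}}| = |Q^\bD| \le N^{\rho^*(Q)}$.

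Second, I would analyse gRTJ as an instance of the generic framework from~\cite{ngo2013skew} applied to $Q^{(w)}$. The algorithm recursively branches on each of the $nw$ Boolean attributes, and at each branching step it evaluates a one-attribute subquery of the form $\bigJoin_j \pi_{X}(\sigma_\phi(R_{i_j}^{(w)}))$ on the currently fixed partial assignment. Because the branching attribute is Boolean, the subquery enumerates at most $2$ values, and computing it reduces to probing up to $m$ (partial) relations for each of the two candidate extensions. Thus each node in the recursion tree costs $O(m)$ work (independent of data structure, treating lookups as unit cost; the encoding length $w$ already accounts for the logarithmic factor that comparison-based approaches would pay here).

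Third, and this is the main obstacle, I would bound the total size of the recursion tree. The standard generic-framework argument shows that the number of leaves in the recursion tree at depth $nw$ is dominated by a sum that itself satisfies the AGM inequality with respect to the subquery structure; more specifically, by induction on the depth $k \le nw$, the number of partial assignments that survive to depth $k$ is at most $N^{\rho^*(Q^{(w)})} = N^{\rho^*(Q)}$ (a fractional edge cover of $Q^{(w)}$ remains feasible for every residual subquery). Since each internal node has fanout at most $2$ and the depth is $nw$, the total number of visited nodes is at most $nw$ times the number of leaves, giving $O(nw \cdot N^{\rho^*(Q)})$ nodes and hence $O(mnw \cdot N^{\rho^*(Q)})$ total work. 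The subtlety will be handling the nodes whose subtrees are empty: one must argue, exactly as in the NPRR analysis, that these ``wasted'' nodes are charged to siblings whose subtrees succeed, so that the leaf-count bound applies to all internal nodes up to a factor of the depth. I would conclude by combining the three stages to obtain the stated bound.
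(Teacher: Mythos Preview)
Your core Stage~3 insight---that the number of partial assignments surviving to depth $k$ is bounded by $N^{\rho^*(Q)}$ because a fractional edge cover of $Q$ restricts to one of every subquery---is exactly the paper's argument (Lemma~\ref{thm:wco-lemma}), and together with the $O(m)$ per-node cost from Stage~2 it is already enough: summing the per-level bound over the $nw$ levels gives $O(mnw \cdot N^{\rho^*(Q)})$ directly. The paper makes this explicit by first proving the instance bound $O\bigl(m\sum_{j=1}^{nw}|Q_{I_j}|\bigr)$ (Theorem~\ref{thm:rtj-exact-t}), where $|Q_{I_j}|$ is precisely the node count at level~$j$, and then applying AGM and Lemma~\ref{thm:wco-lemma} to each summand.

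The detour through ``total nodes $\leq nw \times$ (number of leaves)'' and the anticipated NPRR-style charging is where your plan goes astray. If ``leaves'' means output tuples, the inequality can fail outright: take an instance with empty output but nonempty $Q_{I_1}$, so the recursion tree has nodes but no successful leaves, and there are no siblings to charge to. You do not need any of this. By the recurrence (Lemma~\ref{lem:query-decomp}), the set of nodes at depth~$j$ is \emph{exactly} $Q_{I_j}$, dead-end prefixes included, and $|Q_{I_j}| \leq N^{\rho^*(Q_{I_j})} \leq N^{\rho^*(Q)}$ bounds them all in one stroke. Drop the leaf-counting and charging; what remains is the paper's proof. (Your Stage~1 observation that $\rho^*(Q^{(w)}) = \rho^*(Q)$ is correct and indeed makes explicit a step the paper leaves implicit.)
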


We use \emph{hypergraphs} to model join queries that encode structural properties of the query (see ~\cite{ngo2013skew}).
For a join query $Q$ 
we construct a hypergraph $\cH = (\cV, \cE)$ where $\cV = \cA_Q$ is the set of
attributes of the query and there is a hyperedge $F \in \cE$
for each relation $R$ on an attribute set $F$. Structural
properties of queries such as cyclicity or more generally treewidth
are defined in terms of query hypergraphs.
We now denote a join query directly as a multi-hypergraph $\cH = (\cV, \cE)$.
A query is denoted as a formula $Q := \bigJoin_{F \in \cE} R_F$
where for each hyperedge $F \in \cE$, there is a distinct relation $R_F$
on attribute set $F$. Note that the hyperedges $F \in \cE$ are not necessarily
distinct if two different relations share the same attribute set.

The definition of
certain forms of \emph{subqueries} is central to deriving gRTJ.
In the framework of subqueries, Booleanisation unifies the theory of gRTJ with existing worst-case optimal algorithms as well.
These classes of
subqueries have first been introduced in
the introduction of the \emph{generic-join} framework \cite{ngo2013skew}.
At a high level, gRTJ (alongside LFTJ and NPRR) can be seen as a specialisation
of the generic-join framework. These algorithms have a simple recursive structure that can be considered as a divide-and-conquer approach.
Queries are divided into subqueries (subproblems) that are
recursively solved, and the solutions of the subqueries are combined to solve the original query.

In the following, we derive the generic-join framework formally
using a slightly different definition of subqueries to that
in \cite{ngo2013skew}.
The central result of the generic-join framework is a recurrence between solutions of the original query
and solutions of the subqueries.
We then specialise the generic-join framework to a specific solving strategy that gRTJ (and our later modified algorithm RTJ) uses.
The subqueries are defined based on partitioning the attributes of the query into two disjoint sets.
\begin{definition}[subqueries] \label{defn:subqueries}
Let $Q := \bigJoin_{F \in \cE} R_F$ be a query 
and $I \subseteq \cV$ be a subset of attributes of the query. Define the \emph{subqueries} of $Q$ as
\begin{align*}
Q_I &:= \bigJoin_{F \in \cE} \pi_I(R_F)\text{,} & \\
Q[t_I] &:= \bigJoin_{F \in \cE} \pi_{\cV \setminus I}(R_F \lJoin t_I) & \text{for all } t_I \in Q_I\text{.}
\end{align*}
\end{definition}

Subqueries are inspired by the \emph{splitting rule} of Boolean satisfiability solvers such as DPLL \citep{davis1961machine}.
The splitting rule only defines two smaller subproblems, one for each
truth value of a selected variable.
In the relational (i.e., predicate logic) case, there are possibly more than two subqueries--besides $Q_I$ itself,
there are $|Q_I|$ subqueries of the form $Q[t_I]$, one for each each solution $t_I$ of $Q_I$.
Nonetheless, using Booleanisation, if the attribute set $I$
is a singleton then there are only at most two solutions of $Q_I$. Moreover, the subqueries $Q[t_I]$
are induced by assigning a single attribute to a truth value, equivalent to the splitting rule.
We demonstrate the definitions of the subqueries in an example.
\begin{example}
Let $Q^{(2)}$ $:=$ $R^{(2)}(A_0, A_1, B_0, B_1)$ $\Join$ $S^{(2)}(B_0, B_1, C_0, C_1)$ $\Join$ $T^{(2)}(A_0, A_1, C_0, C_1)$ be the $2$-nd Booleanisation of the triangle query.
For $I := \{A_0\}$, the subqueries of $Q^{(2)}$ are
\begin{align*}
Q_{A_0}^{(2)} &= \pi_{A_0}(R^{(2)}) \Join \pi_\emptyset(S^{(2)}) \Join \pi_{A_0}(T^{(2)})\text{,} & \\
Q^{(2)}[A_0 \mapsto a_0] &=
\pi_{A_1B_0B_1} (\sigma_{A_0 = a_0}(R^{(2)})) 
\Join
S^{(2)} \Join \\
 & \quad \quad\quad\quad \pi_{A_1C_0C_1} (\sigma_{A_0 = a_0}(T^{(2)}))
, \label{eq:triangle-plan}
\end{align*}
where $a_0 \in \{0, 1\}$ is a truth value.
\end{example}
The subqueries are individually well-defined queries that have hypergraphs.
The hypergraphs of the subqueries can be expressed as subhypergraphs of the original query.
\begin{definition}
Let $\cH = (\cV, \cE)$ be a hypergraph.
For $I \subseteq \cV$, define $\cE_I := \{F \cap I \mid F \in \cE\}$.
The \emph{subhypergraph} induced by $I$
is $\cH_I := (I, \cE_I)$ with vertex set $I$
and edge set $\cE_I$.
\end{definition}

The following proposition 
shows that for an attribute subset $I$ the hypergraphs of the subqueries reduce to one of two cases.
\begin{proposition} \label{prop:subhypergraph}
Let $Q := \bigJoin_{F \in \cE} R_F$ be a join query
and $I \subseteq \cV$ be a subset of the attributes of the query.
The hypergraph of subquery $Q_I$ is $\cH_I$ and, for all $t_I \in Q_I$,
the hypergraph of subquery $Q[t_I]$ is $\cH_{\cV \setminus I}$ (i.e., dependent only on $I$ and not a specific tuple $t_I$).
\end{proposition}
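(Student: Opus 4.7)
The plan is to prove both claims by direct unfolding of the definitions: the hypergraph of a natural join query is determined entirely by the attribute sets of its constituent relations, so I only need to track how projection and semijoin transform attribute sets.

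First I would handle $Q_I$. By \propref{subqueries}{}, $Q_I = \bigJoin_{F \in \cE} \pi_I(R_F)$. The attribute set of $\pi_I(R_F)$ is $F \cap I$, so the hyperedge multiset of the query $Q_I$ is exactly $\{F \cap I \mid F \in \cE\} = \cE_I$. For the vertex set, since $\cV = \bigcup_{F \in \cE} F$ and $I \subseteq \cV$, we have $I = I \cap \bigcup_{F \in \cE} F = \bigcup_{F \in \cE}(F \cap I)$, so the vertex set of the hypergraph of $Q_I$ is $I$. Together this gives $(I, \cE_I) = \cH_I$.

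Next I would handle $Q[t_I]$ for a fixed $t_I \in Q_I$. By definition, each conjunct is $\pi_{\cV \setminus I}(R_F \lJoin t_I)$. The semijoin $R_F \lJoin t_I$ is a filter on $R_F$ and so preserves its attribute set $F$; the subsequent projection onto $\cV \setminus I$ yields attribute set $F \cap (\cV \setminus I) = F \setminus I$ (using $F \subseteq \cV$). Hence the hyperedge multiset of $Q[t_I]$ is $\{F \setminus I \mid F \in \cE\} = \cE_{\cV \setminus I}$, and the vertex set is $\bigcup_{F \in \cE}(F \setminus I) = \cV \setminus I$. This yields $\cH_{\cV \setminus I}$, and since neither the hyperedge multiset nor the vertex set depends on the particular tuple $t_I$ (only on $I$ and $\cE$), the independence claim follows immediately.

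The argument is essentially a type-checking exercise, so there is no genuine obstacle. The only pitfall to watch for is being careful that the hypergraph is formally a multi-hypergraph, as noted in the text preceding the proposition; two different relations with the same attribute set contribute distinct hyperedges, and likewise after projecting via $F \cap I$ or $F \setminus I$ the resulting edges should be taken with multiplicity. This is consistent with the definition $\cE_I := \{F \cap I \mid F \in \cE\}$ read as a multiset indexed by $\cE$, and once that convention is fixed the proof reduces to the two computations above.
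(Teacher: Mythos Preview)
Your proposal is correct and follows essentially the same argument as the paper: track the attribute set of each conjunct through projection (yielding $F\cap I$) and through semijoin-then-projection (yielding $F\setminus I$), and take unions to recover the vertex sets. Your write-up is simply more explicit than the paper's, which compresses the second half into the remark that ``semijoin does not affect a query's hypergraph'' and invokes the same computation with $\cV\setminus I$ in place of $I$.
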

\begin{proof}
By definition of projection, if $R_F$ is an input relation of $Q$,
then $\pi_I(R_F)$ is a relation of $Q_I$ on
attribute set $F \cap I$.
The set of attributes of $Q_I$ is then
$\bigcup_{F \in \cE} (F \cap I) = (\bigcup_{F \in \cE} F) \cap I = \cV \cap I = I$.
For each $t_I \in Q_I$,
since semijoin does not affect a query's hypergraph,
similar reasoning holds for the subquery $Q[t_I]$.
\end{proof}

In the following, we derive in Theorem~\ref{thm:recursive-form} a recurrence that
shows a relationship between solutions of the original query $Q$ and solutions of the
subqueries $Q_I$ and, for all $t_I \in Q_I$, $Q[t_I]$.
For all solutions $t_I$ of $Q_I$, the tuple $t_I$
can be combined with solutions of $Q[t_I]$
to form solutions of the original query $Q$.
We say that $Q_I$ controls a \emph{prefix space} of the query
and $Q[t_I]$ controls a \emph{suffix space} of the query for a particular $t_I \in Q_I$.
The Theorem~\ref{thm:recursive-form}
recurrence is the basis of a recursive divide-and-conquer algorithm of the next section.
We break the proof into two propositions.
First, Lemma~\ref{lem:query-decomp} states that the solutions of $Q_I$ over-approximates
the prefix space of $Q$. That is,
there can exist solutions $t_I$ of $Q_I$
which cannot be completed to a solution of $Q$.
\begin{proposition} \label{lem:overapprox}
Let $Q := \bigJoin_{F \in \cE} R_F$ be a query and $I \subseteq \cV$ be a subset of the attributes. Then,
\[\pi_I(Q) \subseteq Q_I.\]
\end{proposition}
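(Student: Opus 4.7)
The plan is to unpack both definitions and verify membership tuple-by-tuple. Take an arbitrary $t \in Q$ and set $u := \pi_I(t) \in \bU^I$; the goal is to show $u \in Q_I = \bigJoin_{F \in \cE} \pi_I(R_F)$. By the definition of join answers given in Section~\ref{chap:preliminaries}, $t \in Q$ is equivalent to the condition that for every $F \in \cE$, the restriction $\pi_F(t)$ lies in $R_F$ (the semijoin formulation $R_F \lJoin t \neq \emptyset$ for a single tuple $t$ collapses to exactly this). Dually, to conclude $u \in Q_I$, I must show that for every $F \in \cE$, the restriction of $u$ to the attribute set of the corresponding hyperedge in $Q_I$ — which is $F \cap I$ by the definition of $\cE_I$ — lies in $\pi_I(R_F)$.

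The core step is then just the monotonicity of projection: for each $F \in \cE$, restrictions compose as
\[
\pi_{F \cap I}(u) \;=\; \pi_{F \cap I}\bigl(\pi_I(t)\bigr) \;=\; \pi_{F \cap I}(t) \;=\; \pi_{F \cap I}\bigl(\pi_F(t)\bigr),
\]
and since $\pi_F(t) \in R_F$, applying $\pi_{F \cap I}$ to it produces an element of $\pi_I(R_F) = \{\pi_{F\cap I}(s) : s \in R_F\}$ by definition of projection. Iterating this over all hyperedges $F \in \cE$ verifies the join condition for $u$ in every component of $Q_I$, so $u \in Q_I$, which is precisely $\pi_I(t) \in Q_I$. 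Since $t \in Q$ was arbitrary, $\pi_I(Q) \subseteq Q_I$.

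There is no genuine obstacle here — the statement is essentially the fact that projection distributes over (and is monotone with respect to) natural join in one direction. The only care needed is bookkeeping: keeping the three attribute sets $F$, $I$, and $F \cap I$ straight, and noting that \emph{$\pi_I$ applied to a relation $R_F$ on attributes $F$} really means $\pi_{F \cap I}(R_F)$, which matches the hyperedge attribute set produced by $\cE_I$ in Proposition~\ref{prop:subhypergraph}. The reverse inclusion $Q_I \subseteq \pi_I(Q)$ is of course false in general — this is exactly the over-approximation phenomenon the paper flags immediately before the statement — so the proof should stop at the one-sided containment.
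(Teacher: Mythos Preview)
Your proof is correct and follows the same approach as the paper's: pick a tuple in $\pi_I(Q)$, observe that its restriction to each $F \cap I$ lies in $\pi_I(R_F)$ because it came from a tuple whose $F$-restriction was in $R_F$, and conclude membership in $Q_I$. The paper compresses this into two lines and glosses over the distinction between $I$-tuples and $(F\cap I)$-tuples that you spell out explicitly; your added bookkeeping is accurate and arguably clearer, but there is no substantive difference in the argument.
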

\begin{proof}
First, note by Definition~\ref{defn:subqueries}, we expand $Q_I := \bigJoin_{F \in \cE} \pi_I(R_F)$. If $t_I \in \pi_I(Q)$,
then immediately by definition $t_I \in \pi_I(R_F)$ for all $F \in \cE$ so $t_I \in Q_I$.
\end{proof}

Second, Lemma~\ref{lem:resultsemi} shows a relationship between semijoin
and the subquery $Q[t_I]$ for a tuple $t_I \in Q_I$. 
\begin{proposition} \label{lem:resultsemi}
Let $Q := \bigJoin_{F \in \cE} R_F$ be a query
and $I \subseteq \cV$ be a subset of the attributes.
If $t_I \in Q_I$ then \[Q \lJoin t_I = \{t_I\} \times Q[t_I].\]
\end{proposition}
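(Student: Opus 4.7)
The plan is to prove $Q \lJoin t_I = \{t_I\} \times Q[t_I]$ by a tuple-chasing double inclusion, viewing the right-hand side as the set of $\cV$-tuples of the form $t_I \cup s$ for some $(\cV \setminus I)$-tuple $s \in Q[t_I]$. This identification is legitimate because $Q[t_I] = \bigJoin_{F \in \cE} \pi_{\cV \setminus I}(R_F \lJoin t_I)$ is a join of $(F \setminus I)$-relations and hence lives in $\bU^{\cV \setminus I}$, so $\{t_I\} \times Q[t_I]$ canonically embeds as a subset of $\bU^{\cV}$. Both sides are therefore subsets of $\bU^{\cV}$, and the equality reduces to matching their memberships.

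For the $(\subseteq)$ direction, I would take $t \in Q \lJoin t_I$, so that $\pi_I(t) = t_I$ and $\pi_F(t) \in R_F$ for each $F \in \cE$. Since $\pi_{F \cap I}(\pi_F(t)) = \pi_{F \cap I}(t_I)$, the tuple $\pi_F(t)$ survives the semijoin, that is, $\pi_F(t) \in R_F \lJoin t_I$. Projecting to $\cV \setminus I$ gives $\pi_{F \setminus I}(t) \in \pi_{\cV \setminus I}(R_F \lJoin t_I)$ for every $F$, and joining these memberships yields $\pi_{\cV \setminus I}(t) \in Q[t_I]$. Hence $t = t_I \cup \pi_{\cV \setminus I}(t) \in \{t_I\} \times Q[t_I]$.

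For the $(\supseteq)$ direction, I would take $s \in Q[t_I]$ and set $t := t_I \cup s$. For each $F \in \cE$, the membership $\pi_{F \setminus I}(s) \in \pi_{\cV \setminus I}(R_F \lJoin t_I)$ supplies some $r \in R_F \lJoin t_I$ whose $(F \setminus I)$-projection agrees with that of $s$. Because $r$ also agrees with $t_I$ on $F \cap I$, reassembling the two parts along the disjoint union $F = (F \cap I) \sqcup (F \setminus I)$ gives $r = \pi_F(t)$, so $\pi_F(t) \in R_F$ for every $F$; combined with $\pi_I(t) = t_I$, this places $t \in Q \lJoin t_I$.

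The main obstacle is essentially notational: one must keep the attribute-set bookkeeping honest, in particular the identification of $\pi_{\cV \setminus I}$ applied to an $F$-tuple with $\pi_{F \setminus I}$, and the reassembly of an $F$-tuple from its $F \cap I$ and $F \setminus I$ parts (which tacitly uses $I \subseteq \cV$ and $F \subseteq \cV$, both guaranteed here). Once those type-level checks are in place, both inclusions collapse into pure definition-unfolding, which is precisely what makes this proposition the right compositional stepping-stone for the recurrence promised in Theorem~\ref{thm:recursive-form}.
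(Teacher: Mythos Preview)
Your proof is correct, but it takes a different route from the paper's. You argue by a direct tuple-chasing double inclusion, unfolding the definitions of $Q$, $Q[t_I]$, and semijoin at the element level. The paper instead works algebraically: it first invokes the distributivity $Q \lJoin t_I = \bigJoin_{F \in \cE} (R_F \lJoin t_I)$, then splits on whether $Q \lJoin t_I$ is empty, and in the nonempty case computes the two projections $\pi_I(Q \lJoin t_I) = \{t_I\}$ and $\pi_{\cV \setminus I}(Q \lJoin t_I) = Q[t_I]$ separately, concluding via the observation that a relation whose $I$-projection is a singleton is the product of that singleton with its $(\cV \setminus I)$-projection. Your argument is more elementary and uniform (no case split, no appeal to distributivity), and it makes the $\supseteq$ direction fully explicit---which is exactly the content the paper's ``clearly $Q[t_I] = \emptyset$'' in the empty case is silently relying on. The paper's version is shorter and foregrounds the algebraic identity that is reused later, but your approach arguably leaves fewer gaps for the reader to fill.
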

\begin{proof}
Let $t_I \in Q_I$.
First, observe that semijoin distributes over a natural join, i.e.,
\begin{equation}
Q \lJoin t_I = \bigJoin_{F \in \cE} (R_F \lJoin t_I)\text{.} \label{eq:resultsemiequiv}
\end{equation}
If $Q \lJoin t_I = \emptyset$ then clearly
$Q[t_I] = \emptyset$ so
$Q \lJoin t_I = \emptyset = \{t_I\} \times Q[t_I]$ as required.
If $Q \lJoin t_I \neq \emptyset$ then
$\pi_I(Q \lJoin t_I) = \{t_I\}$ by definition of semijoin since $I \subseteq \cV$.
Then,
\begin{align*}
\pi_{\cV \setminus I}(Q \lJoin t_I) &= \pi_{\cV \setminus I}\left( \bigJoin_{F \in \cE} (R_F \lJoin t_I) \right) \\
&= \bigJoin_{F \in \cE} \pi_{\cV \setminus I}(R_F \lJoin t_I) = Q[t_I]
\end{align*}
where the second equality is because all attributes in $I$ are bound by the semijoin.
Hence, $\pi_I(Q \lJoin t_I) = \{t_I\}$ and $\pi_{\cV \setminus I}(Q \lJoin t_I) = Q[t_I]$.
Thus, we have $Q \lJoin t_I = \{t_I\} \times Q[t_I]$.
\end{proof}

We prove the main recurrence that constructs the scaffolding for the divide-and-conquer (generic-join) algorithm.
We also apply the recurrence to derive an exact instance bound on the running
time of gRTJ. As far as the authors are aware, this is the first such result. 
\begin{lemma}[\cite{ngo2013skew}] \label{thm:recursive-form}
Let $Q$ be a query and let
$I \subseteq \cV$ be a subset of the attributes. Then,
\begin{equation}
Q = \bigcup_{t_I \in Q_I} (\{t_I\} \times Q[t_I])\text{.} \label{eq:recursive-form}
\end{equation}
\end{lemma}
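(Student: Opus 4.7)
The plan is to prove the stated set equality by showing the two inclusions separately, using the two preceding propositions as the main ingredients. Proposition~\ref{lem:overapprox} handles the $(\subseteq)$ direction by guaranteeing that every relevant projection lies in $Q_I$, while Proposition~\ref{lem:resultsemi} handles both directions by precisely characterising $Q \lJoin t_I$ as $\{t_I\} \times Q[t_I]$.

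For the $(\supseteq)$ inclusion, I would fix an arbitrary $t_I \in Q_I$ and argue that $\{t_I\} \times Q[t_I] \subseteq Q$. By Proposition~\ref{lem:resultsemi}, $\{t_I\} \times Q[t_I] = Q \lJoin t_I$, and the semijoin $Q \lJoin t_I$ is by definition a subset of $Q$ (it merely filters $Q$ to those tuples compatible with $t_I$). Taking the union over all $t_I \in Q_I$ preserves this containment, giving the desired inclusion.

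For the $(\subseteq)$ inclusion, I would take an arbitrary $t \in Q$ and let $t_I := \pi_I(t) \in \pi_I(Q)$. By Proposition~\ref{lem:overapprox}, $t_I \in Q_I$, so $t_I$ appears in the index set on the right-hand side. Since $\pi_I(t) = t_I$, the tuple $t$ is compatible with $t_I$ on $I$, so $t \in Q \lJoin t_I$. Applying Proposition~\ref{lem:resultsemi} once more, $t \in Q \lJoin t_I = \{t_I\} \times Q[t_I]$, which witnesses membership of $t$ in the right-hand union.

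\textbf{Main obstacle.} There is no substantial obstacle: the heavy lifting has already been done in the preceding two propositions, and the theorem is essentially their direct combination. The only mild subtlety is being careful that the $(\subseteq)$ argument uses \emph{both} propositions --- Proposition~\ref{lem:overapprox} to ensure $\pi_I(t)$ is a legal index value $t_I \in Q_I$, and Proposition~\ref{lem:resultsemi} to convert the semijoin $Q \lJoin t_I$ into the Cartesian product form $\{t_I\} \times Q[t_I]$ --- and that both inclusions implicitly identify the $\cV$-tuple $t$ with the pair $(\pi_I(t), \pi_{\cV \setminus I}(t))$, which is justified because $I$ and $\cV \setminus I$ partition $\cV$.
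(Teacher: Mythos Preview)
Your proof is correct and follows essentially the same approach as the paper: both arguments combine Proposition~\ref{lem:overapprox} (to ensure $\pi_I(Q) \subseteq Q_I$) with Proposition~\ref{lem:resultsemi} (to rewrite each $Q \lJoin t_I$ as $\{t_I\} \times Q[t_I]$). The paper packages this as the two-step chain $Q = \bigcup_{t_I \in Q_I} (Q \lJoin t_I) = \bigcup_{t_I \in Q_I} (\{t_I\} \times Q[t_I])$, whereas you spell out the two inclusions separately, but the logical content is identical.
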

\begin{proof}
By Lemma~\ref{lem:query-decomp}, $\pi_I(Q) \subseteq Q_I$.
Thus
\begin{align*}
Q &= \bigcup_{t_I \in Q_I} (Q \lJoin t_I) & \text{(over-approximation)}  \\ 
&= \bigcup_{t_I \in Q_I} (\{t_I\} \times Q[t_I]) & \text{(by Lemma~\ref{lem:resultsemi}).}
\end{align*}
\end{proof}

\subsection{gRTJ Algorithm} \label{sec:radix-triejoin}
We apply the recurrence as the basis of a backtracking
algorithm.
Importantly,
the solution space of the subqueries is reduced from the original query (provided
that we choose $I \subsetneq \cV$ and to be nonempty).
The divide-conquer-combine principles of the algorithm are outlined below:

\begin{itemize}[leftmargin=*]
\item[] \textbf{Divide:} Let $I \subseteq \cV$ be a \emph{singleton}. Solve $Q_I$ directly
by enumerating the possible Boolean solutions, and divide the query $Q$ into
subqueries (subproblems) $Q[t_I]$ for all $t_I \in Q_I$.
\item[] \textbf{Conquer:} Recursively solve the subqueries $Q[t_I]$ for all $t_I \in Q_I$.
\item[] \textbf{Combine:}
Apply the recurrence to combine the solutions to subqueries.
A combined prefix $t_I \in Q_I$ and suffix $t_{\cV \setminus I} \in Q[t_I]$ forms one solution of the query.
We take the union of all prefixes with all of the respective suffixes.
\end{itemize}
The recursion also terminates with the base case when $|\cV| \leq 1$ for which 
we solve the query of at most one attribute by enumeration of its possible Boolean solutions.
Overall, the divide-and-conquer framework
reduces the full query to mere Boolean cases that
can be solved directly. Note that the sets $Q_I$
guide the search for full solutions.
We present in Algorithm~\ref{alg:recursive-join} the full algorithm.
The gRTJ algorithm has a sub-procedure \textsc{BasicSolve} (Algorithm~\ref{alg:basicsolve}) that solves subqueries of the form $Q_I$ by full enumeration of its possible solutions.
Note that full enumeration (a.k.a. guessing, brute-force)
is only practical due to the reduction of the size of the universe
to two by Booleanisation, i.e., zero and one. 

\begin{algorithm}[t]
\caption{Subprocedure: \textsc{BasicSolve}($Q := \bigJoin_{F \in \cE} R_F$)}
\begin{algorithmic}[1]
\Require{Query $Q$, hypergraph $\cH=(\cV,\cE)$}
\State \Return{$\{t \in \{0, 1\}^{\cV} \mid R_F \lJoin t \neq \emptyset$ for all $F \in \cE$\}} 
\end{algorithmic}
\label{alg:basicsolve}
\end{algorithm}

\begin{algorithm}[t]
\caption{High-level view: \textsc{gRTJ}($Q := \bigJoin_{F \in \cE} R_F$)}
\begin{algorithmic}[1]
\Require{Query $Q$, attribute order $\cV = \{A_1, \dots, A_n\}$}
\If {$|\cV| \leq 1$} \label{alg:recursive-join:base-condition}
  \State \Return{\Call{BasicSolve}{$Q$}}
\EndIf
\State Let $I = \{A_1\}$ \label{alg:recursive-join:isubset}
\State $L \gets \Call{BasicSolve}{Q_I}$ \Comment{solve $Q_I$ directly} \label{step:generic-last-call}
\State \Return{$\bigcup_{t_I \in L} (\{t_I\} \times \Call{gRTJ}{Q[t_I]})$} \label{alg:recursive-join:recursion} 
\end{algorithmic}
\label{alg:recursive-join}
\end{algorithm}

In full generality, Algorithm~\ref{alg:recursive-join} conceivably
picks any attribute subset $I$ to define the subqueries at a given recursion.
As a result, the algorithm is not entirely specified and more like a \emph{family} of possible algorithms.
In practice,
a strong technical assumption of all current worst-case optimal algorithms
(one which we follow in the remainder of this work)
is that the attributes are processed in a fixed
total order such as $A_1 \prec \dots \prec A_n$.
First,
choosing the subsets $I$ of the attributes in a deterministic manner according to the recursion level
simplifies the analysis.
Second, and more importantly, 
the attribute order determines the access characteristics of the indexes associated to each relation.
For example, data structures of relation indexes for gRTJ are bitwise tries,
OBDDs, or quadtrees for some examples. However, each of these data structures must be created
with a certain attribute order.
To illustrate the effect of attribute order,
in Figure~\ref{fig:left-rtree} and \ref{fig:right-rtree},
we show running the algorithm in two possible attribute orders on the triangle query
and identical Booleanised database $R^{(1)}(A_0, B_0) = \{(0, 0), (1, 0), (1, 1)\}$,
$S^{(1)}(B_0, C_0) = \{(0, 1), (1, 0)\}$,
$T^{(1)}(A_0, C_0) = \{(0, 1), (1, 0)\}$. The nodes in the recursion tree
represent invocations of the sub-routine \textsc{gRTJ}. Each
recursion level in the recursion tree expands one of the attributes. 
The attribute order $(A_0, C_0, B_0)$
results in the smaller recursion tree in terms of the number of nodes. 
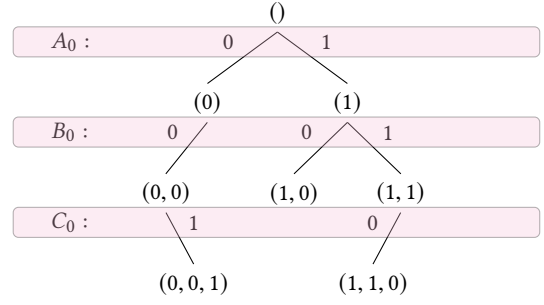
\begin{figure}
\centering
\begin{tikzpicture}[level distance=1.2cm]
\Tree
[.{$()$}
	\edge node[auto=right] (level0-right) {$0$};
	[.{$(0)$}
		\edge node[auto=right] (level1-right) {$0$};
		[.{$(0, 0)$}
			\edge[draw=none]; \node[draw=none] {};
			\edge node[auto=left] (level2-right) {$1$};
			[.{$(0, 0, 1)$} ]
		]
		\edge[draw=none]; \node[draw=none] {};
	]
	\edge node[auto=left] {$1$};
	[.{$(1)$}
		\edge node[auto=right] {$0$};
		[.{$(1, 0)$} ]
		\edge node[auto=left] {$1$};
		[.{$(1, 1)$}
			\edge node[auto=right] {$0$};
			[.{$(1, 1, 0)$} ]
			\edge[draw=none]; \node[draw=none] {};
		]
	]
]
\foreach \Value/\Text in {0/{$A_0:$},1/{$B_0:$},2/{$C_0:$}}
{  
  \node[anchor=west] 
    at ([xshift=-20mm]{level2-right}|-{level\Value-right}) 
    {\Text};
   \draw[fill=magenta!30,opacity=0.3,rounded corners=2] ([xshift=-24mm,yshift=-2mm]{level2-right}|-{level\Value-right})  rectangle ++(7,0.4);
}
\end{tikzpicture}
\vspace{-3mm}
\caption{Recursion tree for attribute order $(A_0, B_0, C_0)$.}
\label{fig:left-rtree}
\end{figure}
~
\begin{figure}
\centering
\begin{tikzpicture}[level distance=1.2cm]
\Tree
[.{$()$}
	\edge node[auto=right] (level0-right) {$0$};
	[.{$(0)$}
		\edge[draw=none]; \node[draw=none] {};
		\edge node[auto=left] (level1-right) {$1$};
		[.{$(0, 1)$}
			\edge node[auto=right] (level2-right) {$0$};
			[.{$(0, 1, 0)$} ]
			\edge[draw=none]; \node[draw=none] {};
		]
	]
	\edge node[auto=left] {$1$};
	[.{$(1)$}
		\edge node[auto=right] {$0$};
		[.{$(1, 0)$}
			\edge[draw=none]; \node[draw=none] {};
			\edge node[auto=left] {$1$};
			[.{$(1, 0, 1)$} ]
		]
		\edge[draw=none]; \node[draw=none] {};
	]
]
\foreach \Value/\Text in {0/{$A_0:$},1/{$C_0:$},2/{$B_0:$}}
{  
  \node[anchor=west] 
    at ([xshift=-20mm]{level2-right}|-{level\Value-right}) 
    {\Text};
   \draw[fill=magenta!30,opacity=0.3,rounded corners=2] ([xshift=-24mm,yshift=-2mm]{level2-right}|-{level\Value-right})  rectangle ++(7,0.4);
}
\end{tikzpicture}
\vspace{-3mm}
\caption{Recursion tree for attribute order $(A_0, C_0, B_0)$.
} \label{fig:right-rtree}
\end{figure}
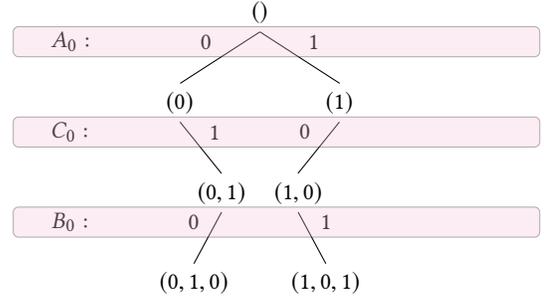

We show the correctness (for an arbitrary attribute order) of the gRTJ algorithm via the recurrence (Theorem~\ref{thm:recursive-form}).
\begin{theorem} \label{thm:rtjoin-correctness}
For all input databases and all queries $Q$,
the output of gRTJ($Q$) (Algorithm~\ref{alg:recursive-join}) is equal to the answer to $Q$.
\end{theorem}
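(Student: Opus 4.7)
The plan is to proceed by induction on $|\cV|$, the number of attributes of the query, using Lemma~\ref{thm:recursive-form} as the inductive step. Since gRTJ operates on Booleanised queries, every attribute ranges over $\{0,1\}$, which is what makes the \textsc{BasicSolve} subprocedure legitimate.

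First I would handle the base case $|\cV| \leq 1$. For $|\cV| = 0$ the query is trivially $\{()\}$ provided every relation is nonempty, and \textsc{BasicSolve} returns exactly this. For $|\cV| = 1$, \textsc{BasicSolve} returns $\{t \in \{0,1\}^{\cV} \mid R_F \lJoin t \neq \emptyset \text{ for all } F \in \cE\}$, and a direct unfolding of the natural join definition shows this equals $Q$: a singleton tuple $t$ belongs to $Q$ iff its projection onto each $F$ (which is either $t$ itself or the empty tuple, depending on whether $A_1 \in F$) joins with some tuple in $R_F$, which is equivalent to $R_F \lJoin t \neq \emptyset$.

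For the inductive step, assume gRTJ is correct on every query with strictly fewer attributes than $Q$. With $I = \{A_1\}$ as in line~\ref{alg:recursive-join:isubset}, Proposition~\ref{prop:subhypergraph} gives that $Q_I$ has vertex set $I$, i.e., a single Boolean attribute, so by the base case \textsc{BasicSolve}$(Q_I)$ correctly returns $L = Q_I$. For each $t_I \in L$, the subquery $Q[t_I]$ has hypergraph $\cH_{\cV \setminus I}$ by Proposition~\ref{prop:subhypergraph}, and in particular has $|\cV| - 1$ attributes, so the inductive hypothesis yields $\textsc{gRTJ}(Q[t_I]) = Q[t_I]$. Therefore line~\ref{alg:recursive-join:recursion} returns
\begin{equation*}
\bigcup_{t_I \in L} \bigl(\{t_I\} \times \textsc{gRTJ}(Q[t_I])\bigr) = \bigcup_{t_I \in Q_I} \bigl(\{t_I\} \times Q[t_I]\bigr),
\end{equation*}
which equals $Q$ by Lemma~\ref{thm:recursive-form}.

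The only real subtlety, and the step I would be most careful about, is justifying that \textsc{BasicSolve} is actually being applied to a subquery whose attribute set is small (here, a singleton) and whose universe is Boolean, so that its brute-force enumeration is both well-defined and correct. This relies on Proposition~\ref{prop:subhypergraph} to pin down the hypergraph of $Q_I$, together with the fact that Booleanisation is preserved under the subquery constructions $Q_I$ and $Q[t_I]$ (since projections and semijoins of relations over $\{0,1\}$ remain over $\{0,1\}$). Once this invariant is in hand, the induction goes through cleanly.
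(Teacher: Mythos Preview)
Your proposal is correct and follows essentially the same approach as the paper: induction on the number of attributes, with the base case handled by \textsc{BasicSolve} and the inductive step by the recurrence of Lemma~\ref{thm:recursive-form}. Your version is more explicit than the paper's in justifying that \textsc{BasicSolve}$(Q_I)$ really returns $Q_I$ (by invoking Proposition~\ref{prop:subhypergraph} and the base case) and in noting that Booleanisation is preserved under the subquery constructions, but these are elaborations of the same argument rather than a different route.
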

\begin{proof}
By induction on the number of attributes $n \geq 1$.
Note that \textsc{BasicSolve} is simply the definition of the join result,
and hence the base case is correct.
Let $n \geq 2$. The induction hypothesis is that the algorithm computes
queries of $n - 1$ attributes correctly.
In particular, for a query $Q$ of $n$ attributes, the subquery $Q[t_I]$ is computed
correctly for all $t_I \in Q_I$.
It follows from from Theorem~\ref{thm:recursive-form} that the algorithm computes the output of $Q$ correctly
as well. By induction, the correctness of the algorithm holds.
\end{proof}

\subsection{Runtime Analysis} \label{sec:analysis}

We derive an instance bound that exactly characterises the per-instance runtime complexity of gRTJ,
then apply the AGM bound to the instance bound to prove worst-case optimality of gRTJ.
Note also that the proof techniques in this section more generally apply to algorithms that
have the same recursive structure as gRTJ (e.g., LFTJ), and at a high level, we need not necessarily assume a Boolean universe.

\begin{figure*}
\centering
\begin{tikzpicture}[level distance=1.5cm]
\Tree
[.{$Q_{A_1}$}
	\edge node[auto=right] (level0-right) {$0$};
	[.{$(Q[A_1 \mapsto 0])_{A_2}$}
		\edge node[auto=right] (level1-right) {$0$};
		[.{$(Q[A_1 \mapsto 0, A_2 \mapsto 0])_{A_3}$}
			\edge node[auto=right] (level2-right) {$0$};
			[.{$\vdots$} ]
			\edge node[auto=left] {$1$};
			[.{$\vdots$} ]
		]
		\edge node[auto=left] {$1$};
		[.{$(Q[A_1 \mapsto 0, A_2 \mapsto 1])_{A_3}$}
			\edge node[auto=right] {$0$};
			[.{$\vdots$} ]
			\edge node[auto=left] {$1$};
			[.{$\vdots$} ]
		]
	]
	\edge node[auto=left] {$1$};
	[.{$(Q[A_1 \mapsto 1])_{A_2}$}
		\edge node[auto=right] {$0$};
		[.{$(Q[A_1 \mapsto 1, A_2 \mapsto 0])_{A_3}$}
			\edge node[auto=right] {$0$};
			[.{$\vdots$} ]
			\edge node[auto=left] {$1$};
			[.{$\vdots$} ]
		]
		\edge node[auto=left] {$1$};
		[.{$(Q[A_1 \mapsto 1, A_2 \mapsto 1])_{A_3}$}
			\edge node[auto=right] {$0$};
			[.{$\vdots$} ]
			\edge node[auto=left] {$1$};
			[.{$\vdots$} ]
		]
	]
]
\foreach \Value/\Text in {0/{$A_1:$},1/{$A_2:$},2/{$A_3:$}}
{  
  \node[anchor=west] 
    at ([xshift=-25mm]{level2-right}|-{level\Value-right}) 
    {\Text};
   \draw[fill=magenta!30,opacity=0.3,rounded corners=2] ([xshift=-29mm,yshift=-2mm]{level2-right}|-{level\Value-right})  rectangle ++(17,0.4);
}
\end{tikzpicture}
\vspace{-3mm}
\caption{An unspecific recursion tree that shows the amount of work performed at each call
for the attribute order $A_1, \dots, A_n$.}
\label{fig:recursion-work-tree}
\end{figure*}
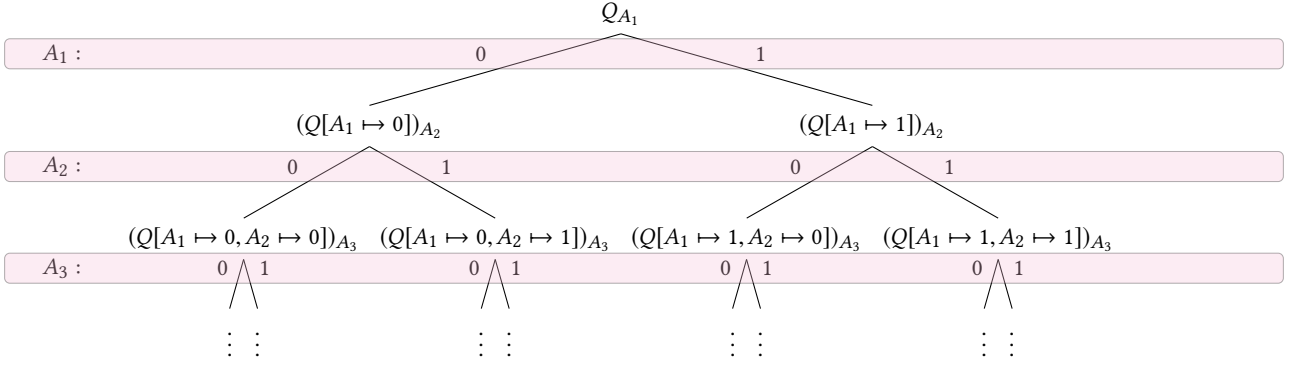

It is worthwhile to note that the
attribute order is immaterial to worst-case performance --
it only affects per-instance performance.
Similarly, choice of encoding for the Booleanisation affects only per-instance performance.
To derive an instance bound,
we follow a \emph{recursion-tree method} (see \cite{cormen2009introduction}).
We sum over the amount of work performed at all levels of the recursion tree
induced by the recursive calls of gRTJ.
We visualise a generic recursion tree in Figure~\ref{fig:recursion-work-tree}
using an attribute order $A_1, \dots, A_n$.
In Figure~\ref{fig:recursion-work-tree}, recursive calls of the recursion tree
are labelled by the subquery that is computed at the call.
Note that the recursion tree is a binary tree due to Booleanisation.
Define $I_j := \{A_1, \dots, A_j\}$.

The following sub-lemma is a simple application of Theorem~\ref{thm:recursive-form} to subqueries.
\begin{lemma} \label{lem:query-decomp}
Let $Q$ be a query and
$I, J \subseteq \cV$ be disjoint subsets of the attributes of the query.
Then,
\begin{equation}
Q_{I \cup J} = \bigcup_{t_I \in Q_I} (\{t_I\}\ \times (Q[t_I])_J)\text{.} \label{eq:query-decomp}
\end{equation}
\end{lemma}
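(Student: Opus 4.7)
The plan is to apply Theorem~\ref{thm:recursive-form} to the query $Q_{I \cup J}$, treating it as a bona fide join query in its own right whose attribute set is $I \cup J$, with $I$ playing the role of the pivot subset. This yields
\[
Q_{I \cup J} \;=\; \bigcup_{t_I \in (Q_{I \cup J})_I} \bigl(\{t_I\} \times (Q_{I \cup J})[t_I]\bigr),
\]
so the lemma will follow once I identify $(Q_{I \cup J})_I$ with $Q_I$ and $(Q_{I \cup J})[t_I]$ with $(Q[t_I])_J$.

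For the first identification, I would unfold the definitions: $Q_{I \cup J} = \bigJoin_{F \in \cE} \pi_{I \cup J}(R_F)$, and then $(Q_{I \cup J})_I = \bigJoin_{F \in \cE} \pi_I(\pi_{I \cup J}(R_F))$. Since $I \subseteq I \cup J$, projections compose, giving $\pi_I(\pi_{I \cup J}(R_F)) = \pi_I(R_F)$, so $(Q_{I \cup J})_I = Q_I$. For the second identification, I would use the disjointness $I \cap J = \emptyset$ to rewrite $(I \cup J) \setminus I = J$, and the fact that $J \subseteq \cV \setminus I$. Unfolding,
\[
(Q_{I \cup J})[t_I] = \bigJoin_{F \in \cE} \pi_J\bigl(\pi_{I \cup J}(R_F) \lJoin t_I\bigr),
\]
and since the semijoin with $t_I$ only filters tuples on the attributes of $I$, it commutes with outer projections as long as $I \subseteq I \cup J$, so $\pi_{I \cup J}(R_F) \lJoin t_I = \pi_{I \cup J}(R_F \lJoin t_I)$, and then composing projections (using $J \subseteq I \cup J$) gives $\pi_J(R_F \lJoin t_I)$. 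On the other side, $(Q[t_I])_J = \bigJoin_{F \in \cE} \pi_J(\pi_{\cV \setminus I}(R_F \lJoin t_I)) = \bigJoin_{F \in \cE} \pi_J(R_F \lJoin t_I)$, again by composition of projections since $J \subseteq \cV \setminus I$. The two expressions match, so substituting into the recurrence gives the lemma.

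The main obstacle, modest as it is, is keeping track of the commutation of projection with semijoin and the composition of nested projections; both are immediate from the set-theoretic definitions in Section~\ref{chap:preliminaries}, so no deeper machinery is required. A secondary subtlety is ensuring Theorem~\ref{thm:recursive-form} really does apply to $Q_{I \cup J}$, which it does because $Q_{I \cup J}$ is (by Proposition~\ref{prop:subhypergraph}) itself a natural join query with hypergraph $\cH_{I \cup J}$ over the attribute set $I \cup J$, and $I \subseteq I \cup J$ is a valid pivot subset.
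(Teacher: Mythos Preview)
Your proposal is correct and follows essentially the same approach as the paper: apply Theorem~\ref{thm:recursive-form} to $Q_{I \cup J}$ with pivot $I$, then identify $(Q_{I \cup J})[t_I]$ with $(Q[t_I])_J$. You spell out more detail than the paper does, in particular you also verify $(Q_{I \cup J})_I = Q_I$, which the paper leaves implicit but is indeed required for the substitution to go through.
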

\begin{proof}
We use the fact that that $(Q_{I \cup J})[t_I] = (Q[t_I])_J$.
The result follows from replacing $Q$ by $Q_{I \cup J}$ in Theorem~\ref{thm:recursive-form}. 
\end{proof}

Note that the amount of work performed at a single recursive call at a recursion level $j$, not including the costs
of subsequent recursive calls, is the time it takes to compute a subquery of the form
$(Q[t_{I_j}])_{A_{j + 1}}$ and is represented by the labels of nodes in the recursion tree.
Our key assumption is that the appropriate relation indexes
are available so that a
subquery $(Q[t_{I_j}])_{A_{j + 1}}$ can be computed in time $O(m \cdot (Q[t_{I_j}])_{A_{j + 1}})$ in the \textsc{BasicSolve} procedure, i.e., $O(1)$ time to query each available index.
The second sub-lemma characterises the total amount of work performed at a recursion level $j$.
\begin{lemma} \label{thm:recursion-tree}
The total amount of work performed by gRTJ (Algorithm~\ref{alg:recursive-join}) at level $j$ in the recursion tree is $O(m \cdot |Q_{I_{j+1}}|)$.
\end{lemma}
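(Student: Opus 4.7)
The plan is to use a recursion-tree accounting argument, where I classify the cost of every call to \textsc{gRTJ} by its depth $j$ in the recursion tree and then sum over all calls at that depth using Lemma~\ref{lem:query-decomp}.

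First, I would identify precisely which recursive calls live at level $j$. By Theorem~\ref{thm:recursive-form}, a straightforward induction on $j$ shows that the set of (prefix) tuples $t_{I_j}$ labelling nodes at depth $j$ of the recursion tree is exactly $Q_{I_j}$: at the root the prefix is empty, and at each recursion an element $t_{I_j} \in Q_{I_j}$ is extended to $t_{I_j} \cup \{A_{j+1} \mapsto b\}$ exactly when $b \in (Q[t_{I_j}])_{A_{j+1}}$, which by Lemma~\ref{lem:query-decomp} is precisely what it takes for the extended tuple to lie in $Q_{I_{j+1}}$.

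Next, I would account for the work at a single node labelled by $t_{I_j} \in Q_{I_j}$. By the standing indexing assumption, the local call invokes \textsc{BasicSolve} on $(Q[t_{I_j}])_{A_{j+1}}$, which takes time $O(m \cdot |(Q[t_{I_j}])_{A_{j+1}}|)$ (the enumeration checks each candidate Boolean value against the $m$ relations via constant-time index probes). Summing over every node at level $j$,
\begin{equation*}
W_j \;=\; \sum_{t_{I_j} \in Q_{I_j}} O\!\left(m \cdot \left| (Q[t_{I_j}])_{A_{j+1}} \right|\right).
\end{equation*}

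Finally, I apply Lemma~\ref{lem:query-decomp} with $I = I_j$ and $J = \{A_{j+1}\}$, so that $I \cup J = I_{j+1}$, to obtain
\begin{equation*}
Q_{I_{j+1}} \;=\; \bigcup_{t_{I_j} \in Q_{I_j}} \{t_{I_j}\} \times (Q[t_{I_j}])_{A_{j+1}}.
\end{equation*}
The union is disjoint because the prefix component $t_{I_j}$ uniquely identifies which set of the union a tuple belongs to, hence $|Q_{I_{j+1}}| = \sum_{t_{I_j} \in Q_{I_j}} |(Q[t_{I_j}])_{A_{j+1}}|$. Plugging this back into the expression for $W_j$ yields $W_j = O(m \cdot |Q_{I_{j+1}}|)$, as required.

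The only subtle step is the first one: one needs the correspondence between recursion-tree nodes at depth $j$ and the tuples of $Q_{I_j}$ to be exact, not just an over-approximation, so that the summation over recursive calls matches the summation over $t_{I_j} \in Q_{I_j}$ used in Lemma~\ref{lem:query-decomp}. This is precisely what the recurrence in Theorem~\ref{thm:recursive-form} guarantees, so the proof reduces to carefully invoking the lemma; the remaining work is routine bookkeeping.
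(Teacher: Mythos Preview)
Your proposal is correct and follows essentially the same approach as the paper: an induction identifying the level-$j$ nodes with $Q_{I_j}$, the indexing assumption giving $O(m \cdot |(Q[t_{I_j}])_{A_{j+1}}|)$ work per node, and then Lemma~\ref{lem:query-decomp} with $I=I_j$, $J=\{A_{j+1}\}$ to collapse the sum to $|Q_{I_{j+1}}|$. Your explicit remark that the union is disjoint (via the distinct prefix components) is a nice addition that the paper leaves implicit.
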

\begin{proof}
First, by the assumption the total amount of work done at level $j$ is represented
by the size of the union of the results to subqueries computed at level $j$,
which we proceed to show by induction is equal to $|Q_{I_{j+1}}|$.
First, the initial call at level $0$ corresponds to the original query $Q$ 
and computing the subquery $Q_{I_1}$.
For the induction hypothesis, we assume
there is only a recursive call on $Q[t_{I_j}]$ at level $j$ for each of the solutions
$t_{I_j}$ of the subquery $Q_{I_j}$.
Then, for each $t_{I_j} \in Q_{I_j}$, there are children of the recursive call $Q[t_{I_j}]$
at level $j + 1$ with call parameter $Q[t_{I_j} \cup t_{A_j}]$
for each $t_{A_j} \in (Q[t_{I_j}])_{A_{j+1}}$.
Thus,
we have in total that there is a respective recursive call at level $j + 1$
for each solution of $\bigcup_{t_{I_j} \in Q_{I_j}}
(\{t_{I_j}\} \times
(Q[t_{I_j}])_{A_{j + 1}}) = Q_{I_{j+1}}$, where here we have applied Lemma~\ref{lem:query-decomp}.
By induction, the total amount of work performed at level $j$ is
$|\bigcup_{t_{I_j} \in Q_{I_j}}
(Q[t_{I_j}])_{A_{j + 1}}| = |Q_{I_{j+1}}|$, where here we have again applied Lemma~\ref{lem:query-decomp}.
\end{proof}

We set up an \emph{instance bound} for the running time of the algorithm
that characterises exactly the running time of the algorithm on a particular query and database instance.
We express the per-instance running time in terms of the sizes of the subqueries $Q_{I_j}$ of $Q$.
The following result is relevant to \emph{beyond worst-case guarantees}.
Also observe that $I_j = \{A_1, \dots, A_j\}$ in the theorem is dependent on the attribute order,
so the running time also is dependent on the attribute order.
\begin{theorem} \label{thm:rtj-exact-t}
Let $Q$ be a join query of $m$ relations and $n$ attributes and $Q^{(w)}$ be
the Booleanisation of the query for an encoding length $w$.
Then gRTJ (Algorithm~\ref{alg:recursive-join}) runs in time
$
O \left( m \cdot \sum_{j = 1}^{nw} |Q_{I_j}| \right) \text{.}
$
\end{theorem}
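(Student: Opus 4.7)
The plan is to perform a recursion-tree analysis, summing the per-level work bounds from Lemma~\ref{thm:recursion-tree} over every level of the recursion tree induced by \textsc{gRTJ} on $Q^{(w)}$. Since $Q^{(w)}$ has $nw$ Boolean attributes and each recursive call fixes exactly one (singleton) attribute, the recursion tree has depth exactly $nw$, with levels indexed $j = 0, 1, \dots, nw - 1$. Writing $A_1, \dots, A_{nw}$ for the fixed attribute order used by the algorithm, the prefix set at level $j$ is $I_j = \{A_1, \dots, A_j\}$, so at level $j$ the algorithm is computing subqueries of the form $(Q[t_{I_j}])_{A_{j+1}}$.

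First, I would invoke Lemma~\ref{thm:recursion-tree}, which already shows that the total work performed across \emph{all} nodes at level $j$ is $O(m \cdot |Q_{I_{j+1}}|)$; this already accounts both for the \textsc{BasicSolve} cost at each node (which under the indexing assumption is $O(m)$ per candidate tuple produced) and for the fact that the multiset of recursive calls at level $j+1$ is in bijection with $Q_{I_{j+1}}$. Summing over $j = 0, 1, \dots, nw - 1$ then gives
\begin{equation*}
T(Q^{(w)}) \;=\; \sum_{j=0}^{nw-1} O\bigl( m \cdot |Q_{I_{j+1}}|\bigr) \;=\; O\!\left( m \cdot \sum_{j=1}^{nw} |Q_{I_j}|\right),
\end{equation*}
which is the desired bound, after reindexing the sum by $j \mapsto j+1$.

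Second, I would tie off the base case and the non-recursive bookkeeping. The base case $|\cV| \leq 1$ handled by \textsc{BasicSolve} on line~\ref{alg:recursive-join:base-condition} of Algorithm~\ref{alg:recursive-join} enumerates at most two Boolean assignments, each checked in $O(m)$ time using the indexes; this is precisely the same per-node cost already charged to the leaves (level $nw - 1$) in Lemma~\ref{thm:recursion-tree}, so it contributes nothing beyond the stated bound. The union on line~\ref{alg:recursive-join:recursion} is likewise absorbed, since forming $\{t_I\} \times \textsc{gRTJ}(Q[t_I])$ incurs only constant overhead per output tuple relative to the combine step, which can be charged to the child level.

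The main obstacle here is purely one of careful indexing: aligning the ``level $j$'' bound of Lemma~\ref{thm:recursion-tree} (which speaks of $|Q_{I_{j+1}}|$) with the stated sum (which starts at $j = 1$ and runs to $nw$), and verifying that the recursion tree really does have depth $nw$ rather than $n$, since the algorithm is running on the Booleanised query $Q^{(w)}$ rather than on $Q$ itself. Once that alignment is in place, the theorem follows immediately from a finite geometric-series-free summation, with no further estimates required; the AGM bound is not invoked here, and is deferred to the subsequent derivation of Theorem~\ref{thm:rtj-worst-case}.
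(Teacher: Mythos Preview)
Your proposal is correct and matches the paper's own proof essentially line for line: the paper also observes that the recursion tree has height $nw-1$ (since $Q^{(w)}$ has $nw$ Boolean attributes), invokes Lemma~\ref{thm:recursion-tree} for the per-level bound $O(m\cdot|Q_{I_{j+1}}|)$, and sums over $j=0,\dots,nw-1$ with the same reindexing to obtain $m\cdot\sum_{j=1}^{nw}|Q_{I_j}|$. Your additional remarks on the base case and the union bookkeeping are sound but go slightly beyond what the paper spells out.
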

\begin{proof}
We observe that the height of the recursion tree is $nw - 1$.
For each attribute in the original query $Q$,
there are $w$ new distinct attributes in $Q^{(w)}$.
We sum
over the amount of work performed at each level $0 \leq j \leq nw - 1$
of the recursion tree to get the total running time of order of
$\sum_{j=0}^{nw-1}(m \cdot |Q_{I_{j+1}}|)
= m \cdot \sum_{j=1}^{nw} |Q_{I_{j}}|.$
\end{proof}

To get a worst-case bound, we use the AGM bound to bound the per-instance running time.
Recall that the AGM bound is the maximum output size of a join query $Q$
in terms of the relation sizes.
In particular, if all relations have the same size $N$, the AGM bound is
$|Q| \leq N^{\rho^*(Q)}$.
For the subqueries $Q_{I_j}$, we have $|Q_{I_j}| \leq N^{\rho^*(Q_{I_j})}$.
To prove worst-case optimality, i.e.,
a running time of $O(mnw \cdot N^{\rho^*(Q)})$, we first require a lemma
bounding the fractional edge cover number $\rho^*(Q_{I_j})$
of a subquery $Q_{I_j}$ by the fractional edge cover number $\rho^*(Q)$ of the original query $Q$.
It has first been noted by \cite{grohe2006constraint}
when discussing join-project plans.
\begin{lemma}[\citeauthor*{grohe2006constraint}] \label{thm:wco-lemma}
Let $Q := \bigJoin_{F \in \cE} R_F$ be a query and let $I \subseteq \cV$
be a subset of the attributes of the query.
Then $Q_I$ has a fractional edge cover number bounded by $Q$. That is,
$
\rho^*(Q_I) \leq \rho^*(Q)\text{.}
$
\end{lemma}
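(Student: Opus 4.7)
The plan is to prove the inequality by constructing an explicit feasible fractional edge cover for $\cH_I$ out of an optimal one for $\cH$, and showing its weight does not exceed $\rho^*(Q)$. Recall that $\rho^*(Q)$ is the optimum of the LP that minimises $\sum_{F \in \cE} x_F$ subject to $\sum_{F \ni v} x_F \geq 1$ for all $v \in \cV$ and $x_F \geq 0$. So it suffices to exhibit a feasible LP solution for $\cH_I = (I, \cE_I)$ of weight at most $\rho^*(Q)$.

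First, I would fix an optimal fractional edge cover $(x_F)_{F \in \cE}$ of $\cH$ with $\sum_{F \in \cE} x_F = \rho^*(Q)$. For each hyperedge $G \in \cE_I$, I would define
\[
y_G \;:=\; \sum_{F \in \cE \,:\, F \cap I = G} x_F.
\]
In other words, I simply push the mass placed by $\mathbf{x}$ on each original edge $F$ onto its trace $F \cap I$ on $I$. Summing over $G \in \cE_I$ partitions $\cE$ by the equivalence relation $F \sim F' \iff F \cap I = F' \cap I$, giving $\sum_{G \in \cE_I} y_G = \sum_{F \in \cE} x_F = \rho^*(Q)$.

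Second, I would check feasibility of $(y_G)_{G \in \cE_I}$ as a fractional edge cover of $\cH_I$. Pick any $v \in I$. Since a vertex $v \in I$ lies in $G = F \cap I$ if and only if $v \in F$, I can unfold
\[
\sum_{G \in \cE_I \,:\, v \in G} y_G \;=\; \sum_{G \ni v} \sum_{F : F \cap I = G} x_F \;=\; \sum_{F \in \cE \,:\, v \in F} x_F \;\geq\; 1,
\]
where the last inequality uses that $\mathbf{x}$ is feasible for $\cH$. The nonnegativity $y_G \geq 0$ is immediate. Hence $\mathbf{y}$ is feasible for $\cH_I$ with weight $\rho^*(Q)$, so $\rho^*(Q_I) \leq \rho^*(Q)$ by LP optimality.

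The proof is essentially routine once one sets up the right mass-pushing assignment; there is no real obstacle. The only minor subtlety to get right is the possibility that $F \cap I = \emptyset$ for some $F$ (in which case $y_\emptyset$ absorbs that weight without helping any vertex constraint) and that two distinct edges $F \neq F'$ may collapse to the same $G = F \cap I = F' \cap I$; my definition of $y_G$ as a sum over the fibre handles both cases uniformly and preserves the total weight exactly.
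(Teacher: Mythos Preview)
Your proof is correct and follows essentially the same approach as the paper: push the weight of each edge $F$ onto its trace $F\cap I$ and observe that the resulting assignment is a feasible fractional edge cover of $\cH_I$ with the same total weight. Your treatment is in fact slightly more careful than the paper's, since you explicitly sum over the fibre $\{F:F\cap I=G\}$ and handle the cases $F\cap I=\emptyset$ and collisions $F\cap I=F'\cap I$, which the paper glosses over.
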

\begin{proof}
From Proposition~\ref{prop:subhypergraph}, the hypergraph of the subquery $Q_I$ is
$\cH_I := (I, \cE_I)$
where $\cE_I = \{F \cap I \mid F \in \cE\}$.
We show that a fractional edge cover of the hypergraph $\cH$ of $Q$
gives a fractional edge cover of $\cH_I$.
Then, since the fractional edge cover {number} $\rho^*(Q_I)$
is the minimum over all fractional edge covers,
we have $\rho^*(Q_I) \leq \rho^*(Q)$.
Given a fractional edge cover for $\cH$,
for each hyperedge $F \in \cE$,
we set the weight of $F \cap I \in \cE_I$ to the weight of $F$.
The two covers have the same total cost.
The cover of $\cH_I$ is valid as well since $\cH_I$ only removes attributes
from the original hypergraph $\cH$.
\end{proof}

We now bound the worst-case running time of gRTJ by the AGM bound of a query.
The running time is worst-case optimal in terms of data complexity.
\begin{proof}[Proof of Theorem~\ref{thm:rtj-worst-case}]
We have, by applying the AGM bound and Theorem~\ref{thm:wco-lemma},
$
|Q_{I_j}| \leq N^{\rho^*(Q_{I_j})} \leq N^{\rho^*(Q)}\text{.}
$
Hence by Theorem~\ref{thm:rtj-exact-t},
the running time of the algorithm is bounded by order of
$
m \cdot \sum_{j = 1}^{nw} |Q_{I_j}| \leq mnw \cdot N^{\rho^*(Q)}\text{.}
$
\end{proof}

The running time
of the algorithm is dependent on the input query and database, unsurprisingly.
It is also dependent on the encoding function of elements in the universe
and the attribute expansion order.
While the query and database are a fixed input, encoding and attribute order represent
two possible degrees of freedom that can be chosen by the algorithm or an optimiser.
In terms of the worst-case bound,
encoding and attribute order can be arbitrary.
For example, the \emph{set intersection} query expressed as
$
Q(A) := R(A) \Join S(A)
$
returns all values in the intersection of $R$ and $S$.
If $R$ and $S$ are in fact disjoint, then
for certain choices of encoding and attribute order
gRTJ can determine in $O(1)$ time that the result is empty.

\section{gRTJ Extensions} \label{chap:extensions}

We consider two extensions of the gRTJ algorithm:
First, the gRTJ algorithm is extended to handle
a more general class of queries formulated in first-order logic
known as \emph{full conjunctive queries}. Full conjunctive queries
express the most frequently occurring queries in databases in practice \citep{chandra1977optimal}.
We have also worst-case optimality extended to full conjunctive queries. Full details of this approach can be found in~\cite{brodythesis19}.

Second, we remove the requirement for the generic framework
to create query-dependent secondary indexes to answer queries.
We introduce a \emph{query-independent} relation representation such that
no additional secondary indexes are required to answer queries.
We modify gRTJ to (1) expand more than one bit at each recursive call,
and (2) restrict the possible attribute orders to a subclass we call the bitwise \emph{interleaved orders}.
We also apply a strongly connected components algorithm to produce an optimal expansion order
given a fixed attribute order (i.e., index) for each relation representation.
The resulting algorithm which we call \emph{Radix Triejoin} (or RTJ) incurs a query-dependent runtime overhead.
For a full conjunctive query $Q$ of $m$ atoms and $n$ variables over relations of size $N$,
the running time of RTJ is
$
O \left( 2^n m w \cdot N^{\rho^*(Q)} \right)\text{.}
$

\subsection{Conjunctive Queries with Inequality.} \label{sec:conjunctive}

The class of \emph{conjunctive queries}
generalise the natural join queries of relational algebra
as expressions in first-order logic. Conjunctive queries express
satisfying assignments for variables rather than attributes.
While a join query is defined over a set of relations
each on a fixed attribute set,
a conjunctive query is defined over a set of atomic formulas.
An \emph{atomic formula} (or simply \emph{atom}) in our context is a formula $R(u)$ where $R(\cA)$
is an $r$-ary relation
and $u$ is a tuple with domain $\cA$ of (not necessarily distinct) variables or constants.
A conjunctive query consists of
a set of atoms.
Conjunctive queries have the following generalisations over join queries:
\begin{enumerate}
\item \emph{Repeated Variables:} The same variable can occur more than once in the same atom;
\item \emph{Repeated Relations:} The same relation can appear as the predicate of different atoms in one query;
\item \emph{Constants:} The arguments of relations can be constants as well as variables.
\end{enumerate}

\citeauthor*{gottlob2012size} extended the ideas of AGM
and derived upper and lower bounds for the sizes of conjunctive query results
and also considered functional dependencies \cite{gottlob2012size}.
In short, we construct a hypergraph $\cH = (\cV, \cE)$
for a conjunctive query where $\cV$ is the set of variables of the query
and there is a hyperedge $F \in \cE$ for every atom in the rule with variable set $F$.
The fractional edge cover bound of AGM also
derives a tight bound on the output size of full conjunctive queries.
If $\rho^*(Q)$ is the fractional edge cover number of $\cH$,
then $|Q| \leq N^{\rho^*(Q)}$ is a tight output size bound for relations of size $N$.

It is relatively straightforward to adapt gRTJ to the three generalisations above
of full conjunctive queries. For example, repeated relations can be handled
purely syntactically by a rewrite of the query formula to remove duplicates. Constants constrain the search tree
rather than the case of a multi-valued variable. Similarly, the second or more occurrences
of a variable in an atom are constrained by a concrete variable binding at an earlier level
in the search tree.

\paragraph*{\Large{Partial Match Queries.}}
A particularly interesting relationship of RTJ is with
partial match retrieval algorithms
for which the literature is huge \citep{rivest1976partial}.
\citeauthor*{flajolet1986partial} (FP henceforth) studied the average-case complexity
for partial match retrieval in binary kd-tries \cite{flajolet1986partial}.
Partial match retrieval corresponds to a
conjunctive query on a single atom over a Boolean universe, i.e.,
full conjunctive queries of the form
\begin{equation}
Q(u_0) \leftarrow R(u_1)\text{,} \label{eq:pmq}
\end{equation}
where $R$ is a relation of arity $r$ and $u_1$ is a tuple of distinct variables
and (not necessarily distinct) constants.
In the partial match retrieval literature, the elements of $u_1$ that are constants (variables)
are called the \emph{specified} (\emph{unspecified}) components.
Interestingly, when the variable expansion order for RTJ is an interleaved order,
the FP average-case analysis applies to RTJ as well.
An interleaved order expands variables of a query cyclically in the Booleanisation.
For example, if $x_1, \dots, x_n$ are variables of the original query,
then the algorithm expands the least significant bit of $x_1$ first,
followed by the least significant bit of $x_2$, \ldots, and wraps
back around to $x_1$ to continue to expand the second least significant bit, \ldots, etc.
The implied constant hidden in the big-$O$ notation of the following theorem is small.
\begin{theorem}[\citeauthor*{flajolet1986partial}]
The average cost, measured by the number of internal nodes of the corresponding recursion tree, of a
partial match query of $s$ specified constants
constructed from a relation of size $N$ and arity $r$ under the Bernoulli model
is $O(N^{1 - s / r})$.
\end{theorem}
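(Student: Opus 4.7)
The plan is to estimate the expected number of internal nodes of the recursion tree of RTJ on a single-atom query $R(u_1)$ with $s$ specified components by bounding the expected population of each level and summing over the levels. Fix notation: let $r$ be the arity of $R$, let each attribute be encoded in $w$ bits, and let $N = |R|$. Under the interleaved expansion order, after exactly $k$ full rounds we have processed one bit from every attribute, so a depth $j = kr + \ell$ corresponds to having fixed $k$ bits on each of the last $r - \ell$ attributes and $k + 1$ bits on each of the first $\ell$ attributes in the cycle. Let $s_j$ denote the number of those $j$ bits that belong to the $s$ specified attributes; since $s_{kr} = ks$ and at most $s$ specified bits can be added in a partial round, $s_j = js/r + O(1)$ uniformly in $j$.

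First I would argue that an internal node of the recursion tree at depth $j$ is in bijection with a prefix $p \in \{0,1\}^j$ such that (a) $p$ agrees with the query constants on the $s_j$ specified positions and (b) $p$ is extended by at least one tuple of $R$. Condition (a) leaves exactly $2^{j - s_j}$ candidate prefixes. Under the Bernoulli model, each of the $N$ tuples is an independent uniform draw from $\{0,1\}^{rw}$, so the probability that a fixed length-$j$ prefix is realised by at least one tuple is $1 - (1 - 2^{-j})^N \leq \min(1, N \cdot 2^{-j})$. By linearity of expectation the expected number $E_j$ of internal nodes at depth $j$ satisfies
\[
E_j \;\leq\; 2^{j - s_j} \cdot \min\!\bigl(1,\, N \cdot 2^{-j}\bigr) \;=\; \min\!\bigl(2^{j - s_j},\; N \cdot 2^{-s_j}\bigr),
\]
and substituting $s_j = js/r + O(1)$ yields $E_j = O\!\bigl(\min(2^{j(1 - s/r)},\; N \cdot 2^{-js/r})\bigr)$.

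Next I would sum $E_j$ over $0 \leq j \leq rw$ and split at the critical depth $j^\star = \log_2 N$ where the two arms of the minimum cross. For $j \leq j^\star$, the bound $\sum 2^{j(1 - s/r)}$ is a geometric series with ratio $2^{1 - s/r} > 1$ dominated by its last term, which is $\Theta(N^{1 - s/r})$. For $j > j^\star$, the bound $\sum N \cdot 2^{-js/r}$ is a geometric series with ratio $2^{-s/r} < 1$ dominated by its first term, which is again $\Theta(N^{1 - s/r})$. Adding both regimes gives the total expected number of internal nodes as $O(N^{1 - s/r})$, as claimed.

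The main obstacle I expect is the rigorous accounting of the $O(1)$ slack in $s_j = js/r + O(1)$ and its effect on the implied constant: the geometric sums inflate by a factor $1/(1 - 2^{-s/r})$, so the argument is cleanest when $0 < s < r$, and the boundary cases $s = 0$ (giving an $O(N)$ full-traversal bound) and $s = r$ (giving an $O(1)$ exact-match lookup) should be verified separately, both consistent with the stated order. A secondary subtlety worth flagging is that the Bernoulli independence assumption requires each bit of each tuple to be independently $\{0,1\}$-uniform; over a non-binary universe $\bU$ this must be justified by an appropriate (e.g.\ hash-based) choice of encoding $E$, which is orthogonal to the combinatorial core of the argument sketched above.
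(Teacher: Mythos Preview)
The paper does not actually prove this theorem: it is stated as a cited result of Flajolet and Puech, with no proof given. So there is no ``paper's own proof'' to compare against; the paper merely remarks that when RTJ uses an interleaved variable order, the recursion tree coincides with the binary kd-trie structure analysed by FP, so their average-case result transfers.

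Your proposal is a sound elementary argument for the $O(N^{1-s/r})$ order. The level-by-level bookkeeping (number of query-consistent prefixes times the probability a prefix is populated, then a geometric split at $j^\star=\log_2 N$) is correct and yields the stated exponent. Two minor remarks: (i) your characterisation of internal nodes as ``prefixes realised by at least one tuple and consistent with the specified bits'' is exactly the condition for the algorithm to visit that node, so the bijection is fine; (ii) the original Flajolet--Puech analysis goes further, using generating functions and Mellin transforms to pin down the implied constant (which is periodic in $\log_2 N$), whereas your argument only gives the order of growth---which is all the theorem as stated here claims. Your handling of the $O(1)$ slack in $s_j$ and of the degenerate cases $s=0$ and $s=r$ is appropriate.
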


\paragraph*{\Large{Inequality Constraints}}
A conjunctive query can be extended with inequality constraints for the variables.
We use properties of the encoding and Booleanisation to adapt RTJ to conjunctive queries
with inequalities. Inequality constraints constrain the search tree, possibly
improving the per-instance runtime of RTJ.
For example, suppose an inequality constraint is $x \leq 4$
in the original query,
and integer values of the universe are encoded using $w$ bits
in the binary representation.
The algorithm deduces that higher-order bits
for satisfying assignments of $x$ are equal to $0$, otherwise the constraint
will be falsified. Suppose variable $x$
is turned into $w$ new variables $x_0, \dots, x_{w-1}$
in the Booleanisation. We deduce based on the encoding
that the inequality constraint $x \leq 4$,
in the Booleanisation, implies the constraints $x_3 = 0,
\dots, x_{w-1} = 0$.
Hence inequality constraints constrain the recursion tree
and allow gRTJ to eliminate ranges of the search space, i.e.,
subtrees of the recursion tree, from having to be explored.

\subsection{Query-Independent Representations.} \label{sec:query-independence}

We adapt gRTJ to execute conjunctive queries using a \emph{query-in\-de\-pen\-dent} relation representation.
Note that in the Section~\ref{chap:algorithm} version of gRTJ and existing worst-case optimal algorithms,
an \emph{order-consistent} secondary index is assumed for each relation.
For a fixed variable order,
order consistency means that for each atom $R(u)$ of a query,
there is a secondary index for $R$ built in an ordering on the attributes of $R$
that is compatible with (or induced by) the variable order.
The induced order is the total order on the attributes $\cA$ of relation $R$
such that, for all $A, B \in \cA$, if variable $u(A)$ precedes variable $u(B)$ in the variable order,
then $A$ precedes $B$ in the induced attribute order.
However, to achieve order consistency requires creating potentially a new secondary
index for every occurrence of a relation in queries.
Note that there are $r!$ unique secondary indexes for a relation of arity $r$.
In this section we adapt gRTJ to remove the requirement to create multiple secondary indexes
and instead we store relations only in \emph{index-organised tables} \citep{shkapsky2016big}.
With a query-independent relation representation, we achieve the possibility of
\emph{sub-linear} queries and memory efficiency as well. The index-organised tables are
precomputed only once during database setup, rather than in query evaluation.

\paragraph{Bitwise Interleaved Orders.}

To achieve a query-independent relation representation, we weaken the order-consistency
assumption and modify gRTJ to expand more than one bit at a time.
For a Booleanisation $Q^{(w)}$ in the original formulation of gRTJ, any of the
$(nw)!$ permutations of the variable set are valid variable orders.
Now we restrict the variable orders to a subclass called the \emph{interleaved}
variable orders.
Assuming relations are also stored in an interleaved attribute order,
we have
an upper bound of $n$ bits that need to be expanded
at each recursive call in order to utilise the index of each relation.
The additional runtime cost incurred at each recursive call is $2^n$ to expand all truth
assignments of $n$ bits.

In the following, we assume an already Booleanised query of
$nw$ variables and write $Q$ instead of $Q^{(w)}$ and
for relations $R$ instead of $R^{(w)}$. The variable set
of $Q$ is $\cV = \{x_i^{(j)} \mid 1 \leq i \leq n, \, 0 \leq j \leq w-1\}$, where
$x_1, \dots, x_n$ are variables of the original (non-Booleanised) query.
We assume also for simplicity that the full conjunctive query does not
include constants or repeated variables in atoms.
The attribute set of a relation $R$ is
$\cA = \{A_i^{(j)} \mid 1 \leq i \leq r, \, 0 \leq j \leq w-1\}$,
where $A_1, \dots, A_r$ are attributes of the original (non-Booleanised) relation.
An \emph{interleaved} variable order is obtained by merging the variable sequences
$(x_i^{(j)})_{0 \leq j \leq w - 1}$ for $1 \leq i \leq n$ into a single sequence via a shuffle.
Formally, a permutation $\alpha$ of $\{1, 2, \dots, n\}$
defines a distinct interleaved order on the variables of $Q$.
The interleaved order given $\alpha$ on the variable set $\cV$ of query $Q$
is represented by a variable sequence
\begin{equation}
\begin{array}{lll}
\ell_\cV &=& x_{\alpha(1)}^{(0)} \prec \dots \prec x_{\alpha(n)}^{(0)} \prec x_{\alpha(1)}^{(1)}
\prec \dots \prec x_{\alpha(n)}^{(1)} \prec \dots \prec \\
  & & x_{\alpha(1)}^{(w-1)} \prec \dots \prec x_{\alpha(n)}^{(w-1)}\text{,} \label{eq:intervarorder}
\end{array}
\end{equation}
where the variables of $Q$ corresponding to the least significant bits
of encodings of elements come first, followed by the second least significant bits, and so on, cyclically.
There are $n!$ unique interleaved variable orders in total.
Similarly, the relation representations are constructed
in an interleaved order on the attributes of the relation. For a relation $R$ of
arity $rw$ in the Booleanisation, there are $r!$ interleaved orders of the attribute set
and have two fundamental observations.
We define the prefix set of a sequence.
\begin{definition}[prefix set]
Given a sequence
$\ell = y_1 \prec y_2 \dots \prec y_l$ and integer $1 \leq k \leq l$,
the \emph{$k$-th prefix} of $\ell$ is
$\{y_1, \dots, y_k\}$.
\end{definition}

\begin{lemma} \label{obs:prefixset}
The $n$-th prefix of an interleaved variable order $\ell_\cV$
is $\{x_1^{(0)}, \dots, x_n^{(0)}\}$,
and the $r$-th prefix of an interleaved attribute order
for a relation of arity $rw$ is $\{A_1^{(0)}, \dots, A_r^{(0)}\}$.
\end{lemma}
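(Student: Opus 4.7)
The plan is to unpack the definition of interleaved orders (equation~\eqref{eq:intervarorder}) and observe that the claim is essentially a direct consequence of the fact that $\alpha$ is a permutation.

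First I would look at the variable sequence $\ell_\cV$ as given. By construction, the sequence is partitioned into $w$ consecutive blocks of length $n$, where the $j$-th block (for $0 \leq j \leq w-1$) consists of the variables $x_{\alpha(1)}^{(j)} \prec \cdots \prec x_{\alpha(n)}^{(j)}$. In particular, the first $n$ terms of $\ell_\cV$ are precisely the block with $j = 0$, namely $x_{\alpha(1)}^{(0)}, x_{\alpha(2)}^{(0)}, \ldots, x_{\alpha(n)}^{(0)}$. Taking these as a set gives the $n$-th prefix of $\ell_\cV$:
\[
\{x_{\alpha(1)}^{(0)}, \ldots, x_{\alpha(n)}^{(0)}\}.
\]

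Next I would use that $\alpha$ is a permutation of $\{1, \ldots, n\}$, so the map $i \mapsto \alpha(i)$ is a bijection on this index set. Consequently, the set $\{\alpha(1), \ldots, \alpha(n)\}$ equals $\{1, \ldots, n\}$, and so
\[
\{x_{\alpha(1)}^{(0)}, \ldots, x_{\alpha(n)}^{(0)}\} = \{x_1^{(0)}, \ldots, x_n^{(0)}\},
\]
which is the first assertion of the lemma.

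For the second assertion, the interleaved attribute order for a relation of arity $rw$ is defined analogously by a permutation $\beta$ of $\{1, \ldots, r\}$, producing a sequence whose first block (the least-significant-bit block) is $A_{\beta(1)}^{(0)} \prec \cdots \prec A_{\beta(r)}^{(0)}$. The same bijection argument applied to $\beta$ yields that the $r$-th prefix is $\{A_1^{(0)}, \ldots, A_r^{(0)}\}$, completing the proof. There is no real obstacle here; the lemma is a definitional observation, and the only thing worth emphasising is that the identity of the prefix \emph{as a set} does not depend on the particular permutation $\alpha$ (or $\beta$) used to shuffle within a block, only on the fact that the first block contains exactly the bit-index-$0$ copies of every original variable (respectively, attribute).
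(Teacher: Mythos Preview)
Your argument is correct. The paper does not supply a proof of this lemma at all; it is stated as an immediate observation from the definition of the interleaved order in equation~\eqref{eq:intervarorder}, and your unpacking---first $n$ terms form the $j=0$ block, then $\alpha$ being a bijection on $\{1,\dots,n\}$ makes the set independent of $\alpha$---is exactly the definitional reasoning the paper leaves implicit.
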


\begin{lemma} \label{obs:interleaved}
Let $R(u)$ be an atom of a Booleanised conjunctive query $Q$, where
relation $R$ has attribute set $\cA$.
For the $n$-th prefix $I := \{x_1^{(0)}, \dots, x_n^{(0)}\}$ of an interleaved variable order $\ell_\cV$,
the preimage $u^{-1}[I] = \{A \in \cA \mid u(A) \in I\}$
is the $r$-th prefix $\{A_1^{(0)}, \dots, A_r^{(0)}\}$ of any interleaved attribute order (i.e., any index) $\ell_\cA$.
\end{lemma}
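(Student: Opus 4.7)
The plan is to exploit the bit-preserving structure of Booleanisation: when a relation of arity $r$ is Booleanised to one of arity $rw$, the $j$-th bit of the $i$-th attribute always corresponds to the $j$-th bit of whichever original variable the original $i$-th attribute was mapped to. This is the only nontrivial fact we need; the rest is bookkeeping.

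First, I would fix notation for the underlying (non-Booleanised) atom. Since $Q$ is the Booleanisation of a conjunctive query with no constants and no repeated variables, the atom $R(u)$ comes from an atom of the form $R(x_{\sigma(1)}, \dots, x_{\sigma(r)})$ where $\sigma \colon \{1,\dots,r\}\to\{1,\dots,n\}$ records which original variable occupies each position of $R$. By the construction of the Booleanisation described in Section~\ref{chap:preliminaries} (applied here to atoms rather than tuples in relations), the induced map $u \colon \cA \to \cV$ acts bitwise, sending $A_i^{(j)} \mapsto x_{\sigma(i)}^{(j)}$ for every $1 \le i \le r$ and $0 \le j \le w-1$. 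This is the key structural property to state explicitly, since the lemma would fail if the Booleanisation were allowed to scramble bit positions across variables.

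Given this, the computation of the preimage is immediate. For $j = 0$ we have $u(A_i^{(0)}) = x_{\sigma(i)}^{(0)} \in \{x_1^{(0)},\dots,x_n^{(0)}\} = I$, so every $A_i^{(0)}$ lies in $u^{-1}[I]$; for $j \ge 1$ we have $u(A_i^{(j)}) = x_{\sigma(i)}^{(j)} \notin I$, since $I$ contains only variables with superscript $0$. Hence $u^{-1}[I] = \{A_1^{(0)},\dots,A_r^{(0)}\}$. Finally, Lemma~\ref{obs:prefixset} already identifies this set as the $r$-th prefix of \emph{any} interleaved attribute order $\ell_\cA$ on $\cA$, which closes the argument.

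The only real obstacle is justifying the bit-preserving form of $u$. The paper specifies Booleanisation at the level of databases and queries but not quite at the level of atoms of a conjunctive query, so I would spend a sentence or two pointing out that the natural extension to conjunctive queries (as alluded to in Section~\ref{sec:conjunctive}) must assign the $j$-th bit of each argument position to the $j$-th bit of the corresponding variable; otherwise the Booleanised atom would no longer encode the same constraint on the original universe. Once that is granted, the rest of the proof is a one-line preimage calculation.
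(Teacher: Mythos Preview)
Your proof is correct. The paper itself does not supply a proof of this lemma: both Lemma~\ref{obs:prefixset} and Lemma~\ref{obs:interleaved} are stated as bare observations (note the label prefix \texttt{obs:}), so there is no paper proof to compare against. Your argument fills in exactly the detail the paper leaves implicit, namely that Booleanisation acts bitwise on atoms via $u(A_i^{(j)}) = x_{\sigma(i)}^{(j)}$, after which the preimage computation and the appeal to Lemma~\ref{obs:prefixset} are immediate. The simplifying assumption you rely on (no constants, no repeated variables in an atom) is explicitly made in the paper just before these lemmas, so your use of an injective $\sigma$ is justified.
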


\paragraph{Radix Triejoin Algorithm.} 

Lemma~\ref{obs:prefixset} and \ref{obs:interleaved} imply a modification to gRTJ to work with
the query-independent relation representation.
We use a bitwise interleaved attribute order. In each recursive call,
we expand $n$ variables at a time,
i.e., add $n$ bits to the candidate solution. The algorithm with these modifications is called \emph{Radix TrieJoin} (RTJ).
By the lemmas, a variable binding for the $n$-th prefix of the variable order
induces an attribute binding on a prefix set of the attribute order of each relation
so that the corresponding index is queryable. Thus, using bitwise interleaved orders,
we achieve the desired query-independent relation representation.
We also note the following changes to the worst-case analysis. As each
recursive call expands now $n$ variables instead of only one variable,
the height of the recursion tree is $nw / n - 1 = w - 1$. 
Since there are $2^n$ truth assignments for $n$ Boolean variables,
the amount of time in a recursive call is
increased to $O(2^n m)$ where $m$ is the number of atoms.
In total, the worst-case runtime of the RTJ algorithm is
\[
O \left( 2^n m w \cdot N^{\rho^*(Q)} \right).
\]

\begin{example}
Suppose there is a relation $R(A, B)$
and $2$ bits is sufficient to encode the database universe.
Let an index for the Booleanisation $R^{(2)}(\cA)$ be built in an interleaved attribute
order of the attribute set $\cA = \{A_0, A_1, B_0, B_1\}$ such as
$\ell_{\cA} = A_0 \prec B_0 \prec A_1 \prec B_1$.
For a conjunctive query
\[
Q(x, y) \leftarrow R(x, y) \wedge R(y, x)\text{,}
\]
we have a $2$-nd Booleanisation
\[
Q^{(2)}(x_0, x_1, y_0, y_1) \leftarrow R^{(2)}(x_0, x_1, y_0, y_1) \wedge R^{(2)}(y_0, y_1, x_0, x_1)\text{.}
\]
RTJ answers the query using a single attribute order (i.e., index) $\ell_{\cA}$.
For example,
in the first recursive call RTJ finds truth values for variables $x_0$ and $y_0$, i.e.,
it expands two bits. In the atom $R(x_0, x_1, y_0, y_1)$,
we have $A_0 \mapsto x_0$ and $B_0 \mapsto y_0$
and so query $(x_0, y_0)$ in $\ell_{\cA}$.
On the other hand, in the atom $R(y_0, y_1, x_0, x_1)$,
we have $A_0 \mapsto y_0$ and $B_0 \mapsto x_0$
and so query $(y_0, x_0)$ in the same index.
\end{example}

\paragraph{Optimal Variable Expansion Orders.}
Up to now the choice of interleaved variable
order (the parameter $\alpha$) of the $n!$ possibilities is irrelevant as we always
expand the $n$-th prefix at each
recursive call.
We now consider a more sophisticated attempt
that takes into account how the indexes of each relation are interleaved with respect to the attributes.
We expand the minimum set of variables at each recursive
call such that the indexes are still queryable.
In this model the attribute orders of relations are fixed upfront.
We develop an algorithm to produce
an optimal variable expansion order given a query and the fixed attribute orders.

The attribute orders impose order constraints on the possible variable expansion orders.
Suppose an index on relation $R$ with attribute set $\cA$ is $\ell_\cA = A_1 \prec \dots \prec A_r$.
Then for all atoms $R(u)$ in the query on relation $R$, for all $1 \leq i \leq r - 1$,
we require that variable $u(A_i)$ precedes $u(A_{i+1})$ in the variable expansion order
(written $u(A_i) \prec u(A_{i+1})$).
It is possible that the constraints imposed on $\ell_\cV$ conflict, i.e.,
for variables $x, y$, one atom imposes
the constraint $x \prec y$ and another atom imposes the constraint $y \prec x$.
To accommodate the order conflict
we expand both variables $x$ and $y$ at the same time,
using the idea that an index can be queried so long as a tuple is defined on a
prefix set of the index.
In the remainder of this section
we develop an optimal \emph{variable expansion order} that designates
an ordering on a partition of the variable set $\cV$ into disjoint sets $S_1, \dots, S_k$, whose
meaning is that in a $j$-th recursive call, RTJ expands the variable set $S_{j+1}$.

Intuitively, suppose there is a conflict in the induced order on $\ell_\cV$
such as $x \prec y$ and $y \prec x$.
As mentioned, we expand variables $x$ and $y$
at the same time. We add them to a set $S = \{x, y\}$. Moreover, for all variables $z \neq y$ such that
$x \prec z$ and $z \prec x$, the algorithm expands all of $S \cup \{z\}$ at the same time as well.
In general,
we are required to find a partition of $\cV$ into $k$ sets $S_1, \dots, S_k$
such that
for all $x, y \in \cV$, if $x \prec y$ and $y \prec x$, then $x, y$ are elements of the same set $S_j$.

The problem is reducible to the strongly connected components (SCCs) of a graph. We create a
directed graph $G = (V, E)$ where
the vertex set $V$ is the set of variables $\cV$ and
for each constraint $x \prec y$ of $\ell_\cV$,
there is an edge $(x, y)$ in the graph.
To find a partition $\mathcal{S} = \{S_1, \dots, S_k\}$ of
$\cV$ such that the previous condition holds
is now equivalent to the SCCs of $G$.
We run an algorithm to compute the SCCs
such as Tarjan's \citep{tarjan1972depth}, which runs in time $O(|V| + |E|)$.
As there are $m$ atoms of the query
and each atom involves at most $n = |\cV|$ variables,
there are at most $m(n-1)$ edges in $G$, one for each pairwise order constraint $u(A_i) \prec u(A_{i+1})$. We
thus compute the set of SCCs $\cS$ in time $O(|V| + |E|) = O(n + mn) = O(mn)$ time.
Moreover, the topological order on the SCCs
is a valid variable expansion order for RTJ.
The topological order is
$\ell_\cS = S_1 \prec \dots \prec S_k$ on $\cS$
such that, if $x \prec y$ for variables $x \in S_i$ and $y \in S_j$,
then $S_i \prec S_j$. Note that Tarjan's
algorithm also topologically sorts the SCCs as a side effect \citep{knuth1993stanford}.
Thus, we compute in $O(mn)$ time the optimal
variable expansion order $\ell_\cS$.

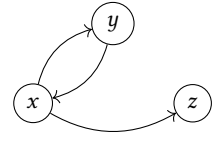
\begin{figure}
\begin{tikzpicture}[every node/.style={circle,draw},node distance=1.5cm]
\node (x) at (0,0) {$x$};
\node [above right of=x] (y) {$y$};
\node [below right of=y] (z) {$z$};

    \path [->, bend left] (x) edge (y);
    \path [->, bend left] (y) edge (x);
    \path [->, bend right] (x) edge (z);
\end{tikzpicture}
\vspace{-3mm}
\caption{Induced graph for the indexes in
Example~\ref{ex:fullconjqind}.} \label{fig:sccs}
\end{figure}

\begin{example} \label{ex:fullconjqind}
We consider the full conjunctive query
\[
Q(x, y, z) \leftarrow R(x, y) \wedge S(y, x, z)\text{,}
\]
where $R(A, B)$ and $S(A, B, C)$ are relations.
We assume the query is already a Booleanisation,
i.e., the encoding length is $1$ for simplicity.
Suppose there are fixed indexes on the relations
$\ell_R = R.A \prec R.B$
and $\ell_S = S.A \prec S.B \prec S.C$.
We represent the variable equivalences
as equivalence classes on the attributes of the relations.
As there are three variables, we have three
equivalence classes enforcing equalities between attributes of the relations:
$x$: $\{R.A, S.B\}$,
$y$: $\{R.B, S.A\}$, and
$z$: $\{S.C\}$.
The indexes $\ell_R$ and $\ell_S$
lift to order constraints on the equivalence classes.
We have $x \prec y$, $y \prec x$, and $x \prec z$
from the index constraints $R.A \prec R.B$,
$S.A \prec S.B$, and $S.B \prec S.C$ respectively.
We form a graph $G = (V, E)$ with
$V = \{x, y, z\}$ and edge
set $E = \{(x, y), (y, x), (x, z)\}$.
The graph is displayed in Figure~\ref{fig:sccs}.
The topological order we compute on the SCCs of $G$ is
$\{x, y\} \prec \{z\}$. Thus,
the optimal variable expansion order (given the fixed indexes)
corresponds to expanding two variables $x$ and $y$ at recursion level $0$.
At recursion level $1$, we expand the singleton variable $z$.
\end{example}

Let a query $Q$ be a Booleanisation of an encoding length $w$,
over $nw$ variables and $m$ atoms.
Let $\ell_\cS = S_1 \prec \dots \prec S_k$
be the optimal variable expansion order where $\cS = \{S_1, \dots, S_k\}$ is
a partition of variable set $\cV$ for query $Q$ into disjoint nonempty subsets.
The amount of time spent in a recursive call at level $j$ is $O(2^{|S_{j+1}|} m)$
to expand the full set of truth assignments of $|S_{j+1}|$ variables.
By Theorem~\ref{thm:rtj-worst-case}, the runtime of RTJ for the variable expansion order $\cS$
and relations of size $N$ is
\begin{equation}
O \left(
m \sum_{j = 1}^{k} 2^{|S_j|} \cdot N^{\rho^*(Q)} \label{eq:specificexp}
\right)\text{.}
\end{equation}
To minimise the above, we want to minimise the size of the subsets $|S_j|$.
Equivalently, since the $S_j$ are a partition of $\cV$, we maximise $k$.
In particular, if we use interleaved variable orders,
we can derive an upper bound.
We observe that for the interleaved order
$|S_j| \leq n$ by Lemma~\ref{obs:interleaved} for all $1 \leq j \leq k$.
Let $p := \max_{1 \leq j \leq k} |S_j|$.
Since $|S_j| \leq p$ for all $1 \leq j \leq k$
and $|S_1| + \dots + |S_k| = nw$, we have
an upper bound for Equation~\ref{eq:specificexp} as
$
O(m 2^p (nw / p) \cdot N^{\rho^*(Q)})
$
where $1 \leq p \leq n$
is a query- and index-dependent factor.
Hence, we achieve a better runtime bound using the SCCs approach
rather than always expanding $n$ variables.
Note that if
$|S_j| = n$ for all $1 \leq j \leq k$,
we retrieve the previous bound, and
if $|S_j| = 1$ for all $1 \leq j \leq k$ (so $p = 1$),
we retrieve the original worst-case runtime of Theorem~\ref{thm:rtj-worst-case}.

\section{Discussion}

We will compare various algorithms related to RTJ.
For example, the Leapfrog Triejoin (LFTJ) algorithm has been developed and deployed within commercial applications before the theoretical lower bound for worst-case optimal join algorithms have been discovered in 2012~\cite{veldhuizen2012leapfrog}. Retrospective analysis revealed the algorithm's worst-case optimality. The Leapfrog Triejoin algorithm is a generalization of leapfrog joins for sorted lists to relations of higher arity. In the list case, with is equivalent to unary relations, a set of $n$ relations is joined by repeatedly progressing iterators over the lists. For instance, to join the three lists  $l_1 = [1,5,7]$, $l_2 = [2,4,5,8]$, and $l_3 = [1,3,5,7]$ the algorithm starts by obtaining an iterator to the first (smallest) element of each list. Let $i_1$ to $i_3$ represent those iterators and e.g. $i_1 \rightarrow 1$ denote the state where the first iterator references the element $1$ in $l_1$. Thus, the algorithm starts with state $(l_1,l_2,l_3) \rightarrow (1,2,1)$. In the next step, iterators are sorted according to the value they reference, resulting in $(l_1,l_3,l_2) \rightarrow (1,1,2)$. Also the maximal referenced element $m=2$ is computed. After the initial phase, the iterator referencing the smallest element is updated to point to the smallest element within its respective list that is not smaller than the current maximum element $m$. Thus, in our example the following sequence of states is processed:
\begin{align*}
    (l_1,l_3,l_2) &\rightarrow (1,1,2) & m=2\\
    (l_1,l_3,l_2) &\rightarrow (5,1,2) & m=5 \\
    (l_1,l_3,l_2) &\rightarrow (5,5,2) & m=5 \\
    (l_1,l_3,l_2) &\rightarrow (5,5,5) & m=5 
\end{align*}
Note that in each step the iterator referencing the smallest element is always the successor of the last updated iterator. In cases where all iterators reach a common element, an element of the join result is obtained. In the sequence above, $5$ is yielded as a result. Furthermore, any of the iterators is moved to the respective successor element. Once, any iterator reaches the end of the list, denoted by $\bot$, the algorithm terminates. Thus, after processing
\begin{align*}
    (l_1,l_3,l_2) &\rightarrow (5,5,7) & m=7\\
    (l_1,l_3,l_2) &\rightarrow (7,5,7) & m=7 \\
    (l_1,l_3,l_2) &\rightarrow (7,8,7) & m=8 \\
    (l_1,l_3,l_2) &\rightarrow (7,8,\bot) & m=8 
\end{align*}
the algorithm terminates.

The leapfrog join is generalised to the Leapfrog Triejoin algorithm to support relations with more than one attribute.
For a given query, e.g. $Q(x,y,z) \leftarrow R(x,y) \wedge S(y,z), \wedge T(x,y)$, a variable order is fixed, e.g. $(x,y,z)$, and a backtracking based enumeration of all satisfying variable assignments conducted. Thus, in the given example, a leapfrog join is conducted for values of $x$ such that $R(x,\_) \wedge T(x,\_)$ is valued. For each value $x=x^\prime$ identified by the join operation, a recursive join enumerating values for the variable $y$ under the constraint that $x=x^\prime$ is initiated. If such a $y=y^\prime$ is found, a third enumeration for values $z=z^\prime$ under the constraint $(x,y)=(x^\prime,y^\prime)$ is conducted. For each value $(x^\prime,y^\prime,z^\prime)$ is yielded as an element of the query result.

The algorithmic efficiency of the algorithm builds on the ability to perform efficient lower-bound queries benefiting from iteratively reduced search spaces due to the gradual introduction of value constraints in each recursive step of the algorithm. To ensure this, input relations are required to be stored in tries (or equivalent index structures) according to the variable order determined for the execution of the query -- in the example above $(x,y,z)$. The required index order on various relations is thus query dependent, and queries like $Q(x,y) \leftarrow R(x,y) \wedge R(y,x)$ require multiple indexes on the same relation.

To the contrast, our RTJ supports efficient processing of arbitrary queries using a single index on each relation due to the option of interleaving the binary encoding of attributes. Nevertheless, both algorithms' per-instance efficiency depends on the variable ordering. In the Leapfrog case the ordering of the actual query variables, in the RTJ case on the ordering of the bits in the encoding. In both cases, the difference between a good and a bad ordering can cause the difference between an instance-optimal and a worst-case optimal query execution. 

Also, both algorithms are based on the idea of enumerating suitable variable assignments. Leapfrog does so by systematically searching the input data, while RTJ is gradually building up the binary encoding of suitable values. Furthermore, although both algorithms refer to tries in their names, their definition is widely independent of the data structure utilized for maintaining the processed relation data -- as long as a set of run-time complexity constraints of certain operations on the relation structure are guaranteed.

A major difference between LFTJ and RTJ is the ability to support multiple occurrences of the same variable within single terms of queries. For instance, the query $Q(x) \leftarrow R(x,x)$ can be directly supported by RTJ, while LFTJ requires a reformulation into $Q(x) \leftarrow R(x,y) \wedge I(x,y)$ where $I$ is a non-materialized identity relation. The support for constants in formulas, like $Q(x) \leftarrow R(x,1)$ is realized by both algorithms through the introduction of a non-materialized relation $C = \{1\}$ and a rewrite into $Q(x) \leftarrow R(x,y) \wedge C(y)$.

In addition to the need of rewriting the input query to fit LFTJ restrictions on variable usage, the lack of supporting multiple occurrences of variables as arguments for the same relation can also negatively affect run-time efficiency. For instance, for the query $Q(x) \leftarrow R(x,x)$ and the database
$
R = \{ (2i, 2i+1) \mid 0 \leq i \leq n \}
$
where $n$ is a natural number, the result of the query is clearly empty. Applying RTJ with an interleaved least-significant-bit first bit-order is able to determine the emptiness of the result $Q$ in $O(1)$ steps. LFTJ, however, has to process the rewritten query $Q(x) \leftarrow R(x,y) \wedge I(x,y)$, where $I$ is a non-materialized identity relation. During evaluation it will bind the variable $x$ to each value in $\{ 2i \mid 0 \leq i \leq n \}$, which covers $n$ elements. For each of those, the LFTJ performs a recursive step to attempt to identify a corresponding value for $y$, which fails in $O(1)$ steps, leading to an overall complexity of $O(n)$ steps. Choosing the alternative variable order of $(y,x)$ leads to the same result. For the given query and database instance, RTJ is instance optimal, while LFTJ is worst-case optimal.

\subsection{Comparison with DPLL Algorithm}

The bit-wise backtracking based nature of our algorithm bears similarities with the DPLL algorithm forming the foundation for many SAT solving tools. Like our algorithm, DPLL is based on the recursive exploration of the assignment space of boolean variables to their binary domain $\left\{\text{true},\text{false}\right\}$ or $\left\{0,1\right\}$. Furthermore, full conjunctive queries can be converted into a (variant) of a SAT solving problem, facilitating the utilization of DPLL for solving those.

For instance, a 1-bit instance of the triangular query
$
Q^{(1)} := R^{(1)}(A_0, B_0) \Join S^{(1)}(B_0, C_0) \Join T^{(1)}(A_0, C_0)\text{,}
$
with relations
\begin{align*}
R(A_0, B_0) &= \{(0, 0), (0, 1), (1, 0)\}\text{,} \\
S(B_0, C_0) &= \{(0, 1), (1, 1)\}\text{,} \\
T(A_0, C_0) &= \{(0, 0), (1, 0)\}\text{.}
\end{align*}
can be encoded into a SAT problem by searching for a satisfying assignments of the boolean variables $A_0$,$B_0$, and $C_0$ of the constraints
\begin{align*}
&((\neg A_0 \wedge B_0) \vee (\neg A_0 \wedge B_0) \vee(A_0 \wedge \neg B_0)) \wedge \\
&((\neg B_0 \wedge C_0) \vee (B_0 \wedge C_0)) \wedge \\
&((\neg A_0 \wedge \neg C_0) \vee (A_0 \wedge \neg C_0)) 
\end{align*}
where each row encodes the content of one of the three relations. Every satisfying assignment of variables yields a different element of the result set $Q$. Since it is a mere enumeration of set entries, the length of this encoding is $O(N*|V|)$ where $N$ is the size of the input relations and $V$ the set of variables.

To solve the example using DPLL the given propositional formula where relations are encoded using Disjunctive normal form (DNF) needs to be converted into Conjunctive normal form (CNF) -- which is co-NP-hard. Alternatively, relations could directly be encoded in CNF form by enumerating the disjunction of the negation of missing elements, thus
\begin{align*}
&(\neg A_0 \vee \neg B_0) \wedge \\
&(B_0 \vee C_0) \wedge (\neg B_0 \vee C_0) \wedge \\
&(A_0 \vee \neg C_0) \wedge (\neg A_0 \vee \neg C_0)
\end{align*}
for the example. This is not circumventing the exponential complexity of converting DNF into CNF formulas -- since the resulting formula contains up to $2^a$ terms for each atom of the full conjunctive query, where $a$ is the number of variables in the corresponding atom. 

The outlined encoding enables the conversion of query joins into a format amendable to the DPLL algorithm. However, in its basic formulation SAT solving problems are merely concerned on determining a single satisfying assignment -- or proofing a lack thereof. In join terms, the algorithm merely determines whether the resulting set is empty or not, and if not, provides a single element as proof. DPLL is designed for determining satisfiability and is thus not comparable to our join algorithm.

However, a variant of SAT, known as All-SAT, asks for an exhaustive enumeration of all satisfying assignments. Applying a solver for this variant to the CNF formula derived above yields the desired result. DPLL can be customized to address this problem formulation. For the remainder of this section we assume a corresponding adaptation, to obtain comparability between DPLL and RTJ.

Both algorithms, DPLL and RTJ, are based on utilizing a back-tracking scheme for exploring the range of possible boolean assignments for query variables. However, DPLL exhibits three major advances over RTJ in this regard: the ability to select variables at arbitrary order, the ability to conduct back-jumps, and the ability of clause learning.

In RTJ the order in which variables are bound during recursive processing is fixed by the order used for constructing input relations. To the contrary, DPLL has the freedom of choosing which variable to bind in each step. In theory, this ability allows DPLL to always obtain ideal variable ordering. For instance, in our previous example, binding $C_0$ first would immediately imply that the given formula is unsatisfiable, thus the query result is empty. However, in order to harness this potential in practice, heuristics determining variable good variable orders are required -- often involving the computation of statistical data on the structure of the processed formula. 

The ability to flexibly chose which variable to bind is furthermore  utilized by state-of-the-art DPLL implementations to facilitate backjumping -- an advanced variant of backtracking. For instance, if RTJ is using the variable order $A_0$, $B_0$, $C_0$ it (might) start with $A_0 = 0$, followed by $B_0=0$ only to determine that neither $C_0 =0$ nor $C_0=1$ will produce a result. It would thus backtrack up to the decision of $B_0$ and continue with $B_0=1$ only to learn that there isn't any solution either. Advanced variants of DPLL, on the other hand, can determine that $A_0=0$ implies $C_0=0$ when detecting the first time that the $(A_0,B_0,C_0)=(0,0,0)$ is not a satisfying solution. Furthermore, if it detects that there is not value for $B_0$ such that $C_0=0$ can lead to a solution, it may conclude that the branch of $B_0=1$ can be skipped, immediately continuing with $A_0=1$. The ability to jump back multiple steps in the back-tracking algorithm can greatly reduce the search space to be covered.

Finally, the ability to learn is closely related to the reasoning performed by the back-jumping process outlined above. The conflict analysis conducted before performing a backjump yields additional information on implied constraints between variables. This information, in the form of additional constraints, is added to the processed formula to avoid exploring the search space later in a different branch constituting the same root problem causing the currently processed conflict. Thus, with learning, DPLL is extended by the ability to modify the processed formula, by adding terms forming logical consequences of the available terms.

The difficulty of DPLL, of cause, is the development of heuristics for deciding which variables to bind, the algorithms for performing conflict analysis, and the decision process on what learned clauses to retain or dismiss. DPLL is thus more like a family of algorithms, then a specific instance of it.

Nevertheless, while discussing similarities and differences between DPLL and RTJ is interesting from an algorithmic point of view -- potentially spawning new ideas for refining either of those -- naively applying DPLL for solving relational join problems is clearly not a viable approach in the general case. As hinted above, the encoding of relational data into CNF may cause and exponential blow-out in the amount of data -- and thus the necessary runtime to process it. Smarter encoding schemes exploiting the ability of propositional logic to describe implicit data sets yielding much more compact CNF representation could be utilized. Nevertheless, pathological cases cannot be avoided.

\subsection{Lazy qdag-based WCOJ Algorithm}
Recently, a data-structure driven worst-case optimal join algorithm design has been presented~\cite{navarro2019optimal}. The algorithm is based on a generalized region quadtree utilizing sharing for maintaining common sub-trees -- referred to as a \textit{qdag}. Tuples within $n$-ary relations are interpreted as $n$-dimensional points, and relations are stored within those qdags as sets of points similar to the way $n$-dimensional points would be stored in generalized Quadtrees. However, their data structure design facilitates the sharing of sub-trees, as well as a set of direct data structure manipulations corresponding in their effect to relational operations on the presented set of points. Among others, union, intersections, (restricted) cross products, and complements can be effectively computed. Furthermore, a combination of the supported operation yielding a worst-case optimal join algorithm is presented~\cite{navarro2019optimal}.

By interpreting relational data as points of an n-dimensional space and asserting integer coordinates, the presented algorithm corresponds to a geometrical interpretation of our booleanization -- assuming the natural encoding and bit-wise interleaving of attribute values. Applying the qdag based algorithm and our RTJ algorithm to correspondingly encoded data yields a sequence of practically identical processing steps. Additionally, in contrast to LFTJ, both algorithms only require a single index structure for each input relation to effectively support arbitrary queries on top of those.

The major difference between the qdag-based algorithm and RTJ is the structure considered as the main element to be manipulated. The qdag based algorithm considers relations as the basic structure on which operations need to be performed to obtain query results. Thus, relational operations like joins, unions, complements and cross products are executed. RTJ, like LFTJ and DPLL based solutions, focuses on query variables and their potential bindings. RTJ recursively builds up satisfying assignments for query variables, to ultimately reach a complete enumeration of those. RTJ's variable focused point of view enables processing of queries like
$Q(x) \leftarrow R(x,x)$ where the query variable $x$ shows up more than once within a single relation $R$. Recursively constructing values of $x$ in the given query is directly supported by our algorithm. However, no operation on the relational level would yield this result. Consequently, the relation-focused qdag-based algorithm requires some additional pre/post processing for these kinds of queries. While this processing step does not lead to worse asymptotic runtime complexity of queries, this observation demonstrates that RTJ's capabilities form a super-set of the qdag-based algorithm.

\section{Conclusion} \label{chap:conclusion}

We presented a practical ``worst-case optimal'' multi-way join algorithm
called the \emph{radix triejoin}.
The algorithm uses booleanisation and is not comparison-based.
It uses binary representation of values in a database
to perform domain reductions resulting in a simplified index structure.
Suitable interleaving of the boolean attributes
makes the relation representation \emph{query independent}:
the ability to compute multiple queries over the same relations 
using just one precomputed index per relation, while still having worst-case running time that matches the AGM bound up to some small factors (similar to prior algorithms).

\bibliographystyle{plainnat}
\bibliography{biblio}

\appendix
\addtocontents{toc}{\protect\setcounter{tocdepth}{1}}

\end{document}